\documentclass[10pt]{article}
\usepackage[a4paper, margin=1in]{geometry}
\usepackage{hyperref}
\hypersetup{
    colorlinks=true,
    linkcolor=black,
    filecolor=magenta,      
    urlcolor=blue,
    citecolor=black,
}
\usepackage[numbers]{natbib} 
\usepackage{graphicx}
\usepackage{xcolor} 
\usepackage{amsmath,amssymb,mathtools} 
\usepackage{bm}   
\usepackage{color,soul}
\usepackage{amsmath,amssymb,mathtools}   
\usepackage{amsthm}                      
\usepackage{arydshln}
\usepackage{stmaryrd}
\usepackage{thm-restate}
\usepackage{booktabs}
\usepackage{multirow}
\usepackage{float}

\theoremstyle{plain}
\newtheorem{lemma}{Lemma}[section]

\theoremstyle{definition}
\newtheorem{definition}{Definition}[section]

\theoremstyle{remark}
\newtheorem*{remark}{Remark}

\newcommand{\Eq}[1]{Eq.~(\ref{#1})}

\def \FP {\operatorname{FP}}
\def \FPe {\operatorname{FP_e}}

\def \supp {\operatorname{supp}\,}
\def \esupp {\operatorname{esupp}\,}

\def \EIFP {\FP(W,\theta,\tau_I)}
\def \EIFPe {\FPe(W,\theta,\tau_I)}

\def \EITLN {\operatorname{E-I\ TLN}}

\def \spec {\operatorname{spec}}

\def\od{\stackrel{\mathrm{def}}{=}}

\begin{document}

\begin{center}
{\Large{
Sequential chaotic oscillations \\ in excitatory--inhibitory threshold--linear networks
}} \\
\vspace{3mm}
{\large Jie Zang, Carina Curto} \\
\vspace{1.5mm}
{\normalsize Division of Applied Mathematics, Brown University} \\
{\normalsize Robert J. and Nancy D. Carney Institute for Brain Science, Brown University} \\
\vspace{1mm}
{\normalsize \texttt{jie\_zang@brown.edu}, \texttt{carina\_curto@brown.edu}} \\
\vspace{3mm}
May 29, 2026
\end{center}

Metastable states, a phenomenon observed in brain dynamics and many other systems, have been proposed as a key feature of healthy brain function, reflecting a balance between integration and segregation. However, it remains unclear how to capture this behavior within a dynamical-systems framework. In this paper, we propose sequential chaotic oscillations (SCOs), arising in excitatory-inhibitory threshold-linear networks (E-I TLNs), as a candidate dynamical mechanism for sequential metastability. As a simple form of chaotic itinerancy, SCOs occur under constant input and consist of a sequence of metastable states whose transition order can be predicted by the underlying graph. To identify the parameter regime for SCOs, we develop new graph rules for E-I TLNs and use them to characterize the fixed point structure of E-I TLNs on paths and cycles. Our results show that the emergence of SCOs requires unstable singleton fixed points and sufficiently strong inhibition. 
In addition to SCOs, we find that E-I oscillations need not be synchronized. Motivated by this, we introduce a decomposition into the $z$-mode and the mean mode, which capture excitatory differences and overall network activity, respectively. These modes are then used to distinguish attractors associated with the full-support fixed point of E-I TLNs on cycles.

\tableofcontents

\section{Introduction}
\label{sec:introduction}
Metastable states, characterized by long transient dynamics before transitioning to another state, have been observed in cortical neural activity~\cite{rabinovich2008transient, jones2007natural, mazzucato2015dynamics} as well as in many other systems, including thermodynamic processes~\cite{sun2016thermodynamic}, physical systems~\cite{franccoise2006encyclopedia}, and human motor coordination~\cite{kelso1984phase}. 
For example, in the gustatory cortex of rats, neural populations generate taste-specific sequences of metastable states~\cite{jones2007natural}. Such dynamics are thought to reflect a delicate balance between cooperation and relative independence among neural populations~\cite{kelso1995dynamic}. Consequently, sequential metastability has been emphasized as a fundamental feature of brain function~\cite{hancock2025metastability}. However, despite its broad relevance, metastability remains challenging to capture within a dynamical-systems framework.

Metastable states in dynamical systems are somewhat counterintuitive: they exhibit transient stability, yet trajectories eventually leave them. As a result, they are not true attractors. One classical mechanism for generating such dynamics is the heteroclinic channel, a narrow region of phase space organized by an appropriate itinerary of heteroclinic orbits~\cite{rabinovich2010heteroclinic}. Such a channel consists of a sequence of saddle equilibria, causing the system to successively approach each saddle, remain near it for some time, and then move on to the next~\cite{rabinovich2020sequential}. However, heteroclinic channels typically arise only in specially structured models.

Another possibility is chaotic itinerancy, a dynamical phenomenon often associated with high dimensional systems~\cite{kaneko2003chaotic}. It was independently identified in models of optical turbulence~\cite{ikeda1989maxwell}, globally coupled chaotic systems~\cite{kaneko1990clustering,kaneko1991globally}, and nonequilibrium neural networks~\cite{tsuda1991chaotic,tsuda1992dynamic}. Chaotic itinerancy has been described as itinerant (wandering) motion among multiple low-dimensional states, mediated by high-dimensional chaos~\cite{ikeda1989maxwell}. 
These low-dimensional states exhibit transient stability even though they are not true attractors. For this reason, they are referred to as attractor ruins~\cite{kaneko2003chaotic}. Chaotic itinerancy is analogous to a heteroclinic channel, except that trajectories pass near a set of attractor ruins rather than a set of saddle fixed points. In many known examples~\cite{ikeda1989maxwell,tsuda1991chaotic,tsuda1992dynamic}, such behavior relies on sufficiently high-dimensional dynamics and tends to produce irregular, difficult-to-control sequences of attractor ruins.

\begin{figure}
    \centering
    \includegraphics[width=1\linewidth]{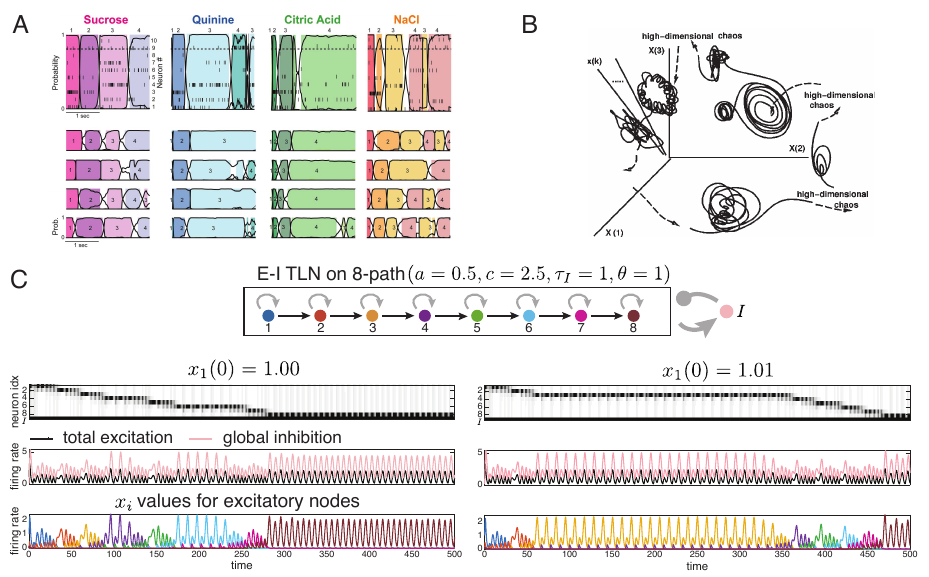}
    \caption{Motivation of our work.
    (A) Taste-specific sequences of metastable states in the rat gustatory cortex, reproduced from Fig.~2 of~\cite{jones2007natural}.
    (B) Schematic representation of chaotic itinerancy, reproduced from Fig.~1 of~\cite{kaneko2003chaotic}.
    (C) Sequential chaotic oscillations in the $8$-path E-I TLN. The initial condition $x_1(0)$ is indicated in each simulation, with $x_2(0)=\cdots=x_8(0)=0.01$ and $x_I(0)=0.1$ for both cases.}
    \label{fig:background}
\end{figure}

Here we show that chaotic itinerancy can arise in a simple class of piecewise-linear recurrent neural networks, namely excitatory-inhibitory threshold-linear networks (E-I TLNs). These networks are generated from directed graphs, so the organization of their linear subsystems is encoded by the underlying graph structure. In this setting, the sequence of metastable states, or attractor ruins in the sense of~\cite{kaneko2003chaotic}, is also graph-determined. We refer to this behavior as sequential chaotic oscillations (SCOs), because it exhibits graph-ordered sequential activity and chaotic dynamics under global E-I oscillations (see Figure~\ref{fig:background}). In contrast to many known examples~\cite{ikeda1989maxwell,kaneko1990clustering,kaneko1991globally,tsuda1991chaotic,tsuda1992dynamic}, this mechanism can already arise in low-dimensional systems, such as the three-dimensional E-I TLN on a path of length $2$. 

To identify the parameter conditions that support  SCOs, it is necessary to analyze the fixed point structure of E-I TLNs, since this provides the dynamical foundation for the emergence of SCOs. This analysis relies on graph rules (Lemma~\ref{lem:domination}) inherited from previous work~\cite{curto2023graph,morrison2024diversity} as well as new graph rules developed here for E-I TLNs (Lemmas~\ref{lem:weakdomination},~\ref{lem:uniform_indegree_on},~\ref{lem:uniform_indegree_off},~\ref{lem:paths} and~\ref{lem:off condition}). Using these rules, we characterize the complete fixed point structure of nondegenerate E-I TLNs, focusing on two graph families: the $n$-node path and the $n$-node cycle (Theorems~\ref{thm:path_supports} and \ref{thm:cycle_supports}). In both cases, the emergence of SCOs requires unstable singleton fixed points together with sufficiently strong inhibition.

In E-I TLNs, SCOs arise from a hybrid mechanism combining E-I oscillations with the underlying graph structure. The E-I oscillatory component produces prolonged chaotic activation of each node, while the graph structure determines the firing order of the excitatory nodes. The results for paths suggest that chaotic itinerancy need not be driven by high-dimensional chaos; rather, it may itself serve as a building block of high-dimensional chaotic dynamics, as illustrated by E-I TLNs on cycles. We emphasize that the attractors and chaotic behaviors reported here are identified numerically using MATLAB's \texttt{ode45} solver and are not established by rigorous mathematical proof. Nevertheless, because the system is continuous and piecewise linear, we expect the numerical simulations to be reliable.

Besides the singleton fixed points associated with SCOs, the full-support fixed point of an E-I TLN on a cycle is also interesting since it can give rise to at least three distinct attractors when it becomes unstable. To distinguish parameter regimes for these attractors, we introduce a new $z$-mode and mean mode decomposition. 
This decomposition provides a new perspective by separately capturing activity differences and total activity, thereby going beyond a stability-based analysis of fixed points alone (Theorem~\ref{thm:cycle_2_modes}). It allows us to classify the parameter regimes associated with different attractors, including E-I oscillations, CTLN-like oscillations, and flower-like attractors.

Finally, the paper is organized as follows. Section~\ref{sec:introduction} introduces the background and motivation for our work. Section~\ref{sec:preliminaries} introduces the model and basic definitions. Section~\ref{sec:results} presents the main results on fixed point structure and attractor classification, including Theorems~\ref{thm:path_supports}, \ref{thm:cycle_supports}, and \ref{thm:cycle_2_modes}. Section~\ref{sec:discussion} discusses the implications of these results. Section~\ref{sec:proof} provides the mathematical foundations of E-I TLNs and the proofs of the main results.

\section{Preliminaries}
\label{sec:preliminaries}
Here we briefly recall the E-I TLN framework and its relation to CTLNs. E-I TLNs are graph-based networks introduced in~\cite{curto2025graphical} as an extension  of combinatorial threshold-linear networks (CTLNs)~\cite{morrison2024diversity}, which were designed to investigate the link between graph structure and network dynamics~\cite{curto2024stable,parmelee2022sequential}. CTLNs consist entirely of excitatory nodes, while inhibition is incorporated implicitly through a global ``blanket of inhibition''~\cite{fino2011dense} that enforces competition. The development of E-I TLNs was motivated in part by the work of Lienkaemper et al.~\cite{lienkaemper2025diverse}, who showed that E-I spiking networks can be reduced to CTLNs by tuning the inhibitory timescale. Building on this insight, the E-I TLN was introduced in~\cite{curto2025graphical}, bringing  an explicit inhibitory node back into a graph-based threshold-linear network.

Due to the similarity between E-I TLNs and CTLNs, it was shown in~\cite{curto2025graphical} that for any positive inhibitory timescale, the fixed points of E-I TLNs can be matched with those of the corresponding CTLN. Thus, existing CTLN theory~\cite{parmelee2022core,curto2023graph}, including domination theory~\cite{curto2025graphical} (restated here as Lemma~\ref{lem:domination}), applies to E-I TLNs, but only within the moderate inhibition regime. In particular, the strong and weak inhibition regimes fall outside the scope of previously known graph rules. To address this gap, we develop a new set of parameter-dependent graph rules (Lemmas~\ref{lem:weakdomination}--\ref{lem:off condition}) that either guarantee or exclude specific subsets as fixed point excitatory supports. See Section~\ref{sec:proof} for further details and proofs.

\paragraph{E-I TLN dynamics.}
 An E-I TLN is a specific TLN model consisting of $n$ excitatory nodes together with a single global inhibitory node $I$. The dynamics of the excitatory nodes, $x_1,\ldots, x_n$, and the inhibitory node, $x_I$, follow the standard threshold-linear form:
 \begin{subequations}\label{eq:EI_TLN}
 \begin{align}
 \tau_E \frac{dx_i}{dt} &= -x_i + \left[ \sum_{j=1}^{n} W_{ij}x_j + W_{iI}x_I + b_i \right]_+ \,\,\, \od -x_i + [y_i]_+, \quad i = 1, \dots, n, \\
 \tau_I \frac{dx_I}{dt} &= -x_I + \left[ \sum_{j=1}^{n} W_{Ij}x_j + W_{II}x_I + b_I\right]_+\od -x_I + [y_I]_+.
 \end{align}
 \end{subequations}
 Here, $\tau_E, \tau_I > 0$ denote the timescales of the excitatory and inhibitory nodes, respectively. We will measure time in units of $\tau_E$, and thus set $\tau_E = 1$. The function $[y]_+ = \max\{y,0\}$ denotes the standard ReLU activation, and $y_i$ and $y_I$ denote the terms inside the threshold. The external input is given by $b = (b_1,\dots,b_n,b_I)^T$.
 
What makes this an E-I TLN (as opposed to just a TLN) is the designation of nodes $[n] \od \{1,\ldots,n\}$ as excitatory, and node $n+1$ as inhibitory, which we will denote with index label $I$. Accordingly, the $W$ matrix satisfies Dale's law, with
$$ W_{ij} \geq 0, \quad W_{Ij} > 0, \quad W_{iI} \leq 0,  \;\; \text{and} \;\; W_{II} \leq 0 \text{ for } i,j \in [n], \; I = n+1.$$

\paragraph{Fixed points of an E-I TLN.} 
For such an E-I TLN, a \emph{fixed point} is defined, as usual, to be a point $x^* \in \mathbb{R}^{n+1}$ such that
 \begin{equation*}
 \left.\frac{dx_i}{dt}\right|_{x=x^*}=0, \quad \text{ for all } i \in [n]\cup\{I\}.
 \end{equation*}
For simplicity, and in anticipation of our graph-based E--I TLNs, we assume here that $b_I = 0$, $b_i = \theta > 0$ for all $i \in [n]$, and $W_{II} = 0$. Under these assumptions, it is straightforward to check that $x_I^* > 0$ for every fixed point. This observation motivates us to define, in addition to the \emph{support} of $x^*$ as the set of active nodes, the \emph{e-support} of $x^*$ as the set of active excitatory nodes:
 \begin{equation*}\label{eq:support}
 \supp (x^*) \; \od \; \{\, i \in [n]\cup\{I\} \mid x_i^* > 0 \,\}, \qquad 
 \esupp (x^*) \; \od \; \{\, i \in [n] \mid x_i^* > 0 \,\}.
 \end{equation*}
Since $x_I^*>0$ for every fixed point, we have $\supp(x^*) = \esupp(x^*) \cup \{I\}$.

In nondegenerate TLNs, there is at most one fixed point with a given support~\cite{curto2019fixed}. Similarly, for nondegenerate E-I TLNs, we find that there is at most one fixed point with a given e-support (see Lemma~\ref{lem:unique fp} in Section~\ref{sec:link_CTLN} for the definition and proof). Therefore, the collection of all supports or e-supports of an E-I TLN is in one-to-one correspondence with the set of all fixed points. We denote these collections by $\EIFP$ and $\EIFPe$, respectively:
 \begin{eqnarray*}\label{eq:FPe}
     \EIFP &\od& \bigl\{\, \sigma \cup \{I\} \mid
     \sigma \cup \{I\} = \supp(x^*) \text{ for some fixed point } x^* 
     \text{ of the }  \EITLN \bigr\},\\
     \EIFPe &\od&
     \bigl\{\, \sigma \mid
     \sigma = \esupp(x^*) \text{ for some fixed point } x^* 
     \text{ of the }  \EITLN \bigr\}.
 \end{eqnarray*}
Note that for any $\sigma \in \EIFPe$, the value of the corresponding fixed point $x^*$ is easily recovered: $x_k^* = 0$ for $k \in [n]\setminus \sigma$, and the positive values $x_\sigma^*$ and $x_I^*$ are given by:
$$ x^*_{\sigma \cup \{I\}} = (I-W_{\sigma \cup \{I\}})^{-1}b_{\sigma \cup \{I\}}.$$
Here, $x^*_{\sigma \cup \{I\}}$ and $b_{\sigma \cup \{I\}}$ denote the column vectors obtained by restricting to the indices in $\sigma \cup \{I\}$, and $W_{\sigma \cup \{I\}}$ denotes the principal submatrix of $W$ obtained by restricting both rows and columns to $\sigma \cup \{I\}$.

\begin{figure}
    \centering
    \includegraphics[width=0.9\linewidth]{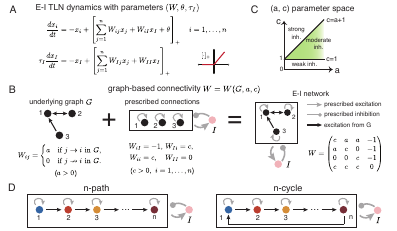}
    \caption{
    Construction of an E-I TLN.
    (A) Graph-based E-I TLN equations.
    (B) Example illustrating how the connectivity matrix $W$ is constructed from a graph $G$.
    In the connectivity matrix $W$, the inhibitory neuron is indexed as node $n+1$, so the last row and last column correspond to the prescribed weights $W_{Ii}$ and $W_{iI}$, respectively.
    (C) Three inhibition regimes in the $(a,c)$ parameter space, separated by the lines $c=1$ and $c=a+1$.
    (D) The two graph families studied in this paper.
}
    \label{fig:EI_W_setting}
\end{figure}

\paragraph{Graph-based E-I TLNs.} In this work, we will focus on graph-based E-I TLNs, where the connectivity matrix $W$ comes from a directed graph. The graph $G$ specifies the interactions among excitatory nodes with each vertex corresponding to one excitatory node. For $i,j = 1, \dots, n$, the connections defined from $G$ are given by
 \begin{align*}\label{eq:EI_W}
    W_{ij} = 
     \begin{cases}
         a & \text{if } j \to i \text{ in } G, \\
         0   & \text{if } j \nrightarrow i \text{ in } G,
     \end{cases}
 \end{align*}
 where $a>0$ represents the excitation weight.  All remaining (non-graph-based) connections are set by prescribed values. For any excitatory node $i$ and the inhibitory node $I$,
 \begin{equation*}
     W_{ii} = c,\, W_{iI} = -1,\, W_{Ii} = c, \,\text{and} \,\, W_{II} = 0,\quad i = 1,\dots,n,
 \end{equation*}
 where $c>0$ represents the inhibition contribution from excitatory nodes. Unlike other TLNs (such as CTLNs), each excitatory neuron $i$ in E-I TLNs includes a self-excitation term~\cite{curto2025graphical} whose magnitude is set to $W_{ii}=-W_{iI} W_{Ii}$. This condition ensures a balance between inhibitory and excitatory interactions. A construction of the connectivity matrix $W$ for a  E–I TLN is illustrated in Figure~\ref{fig:EI_W_setting}B. In addition, to focus on internally-generated (rather than input-driven) dynamics, we take $b=(\theta,\dots,\theta,0)^T$,
where $\theta>0$ is the input to each excitatory node and the inhibitory node receives no external input. This is without loss of generality, since any constant input to the inhibitory node can be absorbed by a change of variables, yielding an equivalent system with $b_I=0$.

 Following the above definition, an E-I TLN is fully specified by its underlying directed graph $G$, the excitation weight $a$, the inhibition weight $c$, the external input $\theta$, and the inhibitory timescale $\tau_I$. We therefore denote such a network by $(G,a,c, \theta, \tau_I)$. 
 
 Throughout this paper, we focus on the positive $(a,c)$ parameter space, which is divided into the strong, moderate, and weak inhibition regimes, as shown in Figure~\ref{fig:EI_W_setting}C. The moderate inhibition regime corresponds to the parameter range for CTLNs, where a series of graph rules are available~\cite{curto2024stable,curto2023graph}. A major difference in this inhibition regime between CTLNs and E-I TLNs is that they have different fixed-point stability. For example, singleton fixed points are always stable in CTLNs, whereas in E-I TLNs they may become unstable. By contrast, the other inhibition regimes require new graph rules. In addition, we focus on two graph families, namely $n$-paths and $n$-cycles (see Figure~\ref{fig:EI_W_setting}D), defined as follows:
\begin{definition}[$n$-path and $n$-cycle]
The \emph{$n$-path} ($n\ge 1$) is the directed graph on vertex set $[n] = \{1,\dots,n\}$ with edges $i\to i+1$ for $i=1,\dots,n-1$.  
The \emph{$n$-cycle} ($n\ge 3$) is the directed graph on vertex set $[n]$ with edges $i\to i+1$ for $i\in [n]$, where indices are taken modulo $n$.
\end{definition}
Note that for an $n$-path, $n \ge 1$ is sufficient to form a path. Thus, a single node with no edges, i.e., a singleton, is a $1$-path. Note that these definitions refer only to the underlying excitatory graph; the full E-I TLN includes an additional inhibitory node and more edges.

\section{Results}
\label{sec:results}
Here we show that the emergence of sequential chaotic oscillations requires unstable singleton fixed points (supported on one excitatory node together with the inhibitory node) and sufficiently strong inhibition $(c>a+1)$ for E-I TLNs on paths and cycles. To establish this, we first analyze the fixed-point structure of nondegenerate E-I TLNs. Nondegeneracy ensures that each support corresponds to at most one fixed point, and it holds for generic parameter choices. The definition of nondegeneracy and the proof of fixed-point uniqueness are given in Section~\ref{sec:link_CTLN}.

\begin{figure}[t]
    \centering
    \includegraphics[width=1\linewidth]{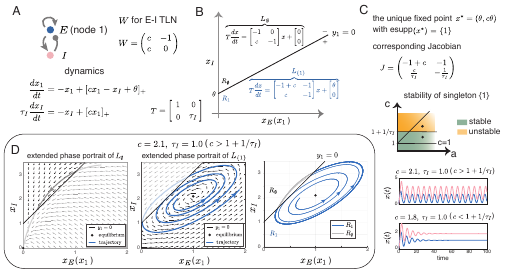}
    \caption{
    E-I oscillations in an E-I TLN whose underlying graph is a singleton.
    (A) The circuit diagram, connectivity matrix, and dynamics of the singleton E-I TLN.
    (B) The original E-I TLN decomposes into two linear subsystems, $L_{\emptyset}$ and $L_{\{1\}}$, separated by the  hyperplane $y_{1} = c x_{1} - x_{I} + \theta = 0$. $T=(1,0;0,\tau_I)$ is the timescale matrix.
    (C) The unique fixed point supported on $\{1,I\}$ and the associated dynamical behaviors (stable equilibrium and E-I oscillation), depending on the stability of this fixed point.
    (D) Phase portraits of $L_{\emptyset}$ and $L_{\{1\}}$ with $c=2.1$ and $\tau_I=1$. The gray and blue trajectories illustrate the influence of the stable node in $L_\emptyset$ (left) and the unstable focus in $L_{\{1\}}$ (middle). The right figure shows how switching between the two subsystems produces a stable limit cycle that alternates between the chambers $R_{1}$ (governed by $L_{\{1\}}$) and $R_{\emptyset}$ (governed by $L_{\emptyset}$). The external input is set to $\theta =1$.
     }
    \label{fig:singleton}
\end{figure}

\subsection{Singleton E-I TLN}
We begin with the most elementary instance of an E-I TLN: the singleton E-I TLN. We call it ``singleton'' because its underlying graph $G$ is a singleton, even though the full E-I TLN contains two nodes: one excitatory node and one inhibitory node.
Despite its simplicity, this case already illustrates how excitation, inhibition, and the ReLU nonlinearity interact to generate E-I oscillations, and it serves as a useful baseline for understanding more complex networks.

In this setting, the E-I TLN dynamic is given by ($\tau_E=1$)
\begin{align*}
    \frac{dx_1}{dt} &= -x_1 + \left[  c x_1 - x_I + \theta \right]_+,  \\
    \tau_I \frac{dx_I}{dt} &= -x_I + \left[ cx_1 \right]_+.
\end{align*}
Following the definition of $y$ value, we have $y_1=c x_1 -x_I+\theta$ and $y_I=c x_1$. For E-I TLNs, the dynamics are bounded below, i.e. $x\ge 0$ (see Section~\ref{sec:link_CTLN} for the proof). Hence, we restrict attention to the region $x_1 \geq 0$, $x_I \geq 0$, where the boundary $y_1(x)=0$ divides the phase space into two chambers:
\[
R_{\emptyset}=\{x \mid y_1(x)\leq 0,\; y_I(x)\geq 0\}, \qquad
R_{\{1\}}=\{x \mid y_1(x)>0,\; y_I(x)\geq 0\}.
\]
Correspondingly, the system is divided into two linear subsystems: $L_{\emptyset}$, which governs the dynamics in $R_{\emptyset}$, and $L_{\{1\}}$, which governs the dynamics in $R_{\{1\}}$. Throughout the paper and in the figures, we may write $R_1$ for $R_{\{1\}}$, 
$R_{12}$ for $R_{\{1,2\}}$, and so on, for simplicity.

Since $y_I = c x_1 \geq 0$ and $y_1\le 0$, the linear system $L_\emptyset$ is given by
\begin{align*}
     \frac{dx_1}{dt} &= -x_1,  \\
     \tau_I \frac{dx_I}{dt} &= -x_I +  cx_1.
\end{align*}
The fixed point for $L_\emptyset$ is $(x^*_1,x^*_I) = (0,0)$ with eigenvlaues $-1$ and $-1/\tau_I$, so it is a stable node. However, at this point we have $y_1(x^*)=\theta>0$, meaning that $(0,0)$ does not lie in the chamber $R_\emptyset$ where $L_\emptyset$ is valid (see Figure~\ref{fig:singleton}D). Therefore the fixed point for $L_\emptyset$ cannot serve as a fixed point of the original E-I TLN. 

The other linear subsystem is $L_{\{1\}}$, which is given by 
\begin{align*}
    \frac{dx_1}{dt} &= -x_1 +  c x_1 - x_I + \theta ,  \\
    \tau_I \frac{dx_I}{dt} &= -x_I +  cx_1.
\end{align*}
 $L_{\{1\}}$ has a fixed point $(x^*_1,x^*_I) = (\theta,c\theta)$. Since $y_1(x^*)=\theta>0$, this fixed point exists in the chamber $R_1$, which is governed by $L_{\{1\}}$. Hence it survives as a fixed point of the full system with the support $\{1,I\}$ and e-support $\{1\}$. Moreover, by computing the Jacobian matrix
\begin{equation*}
J =
\begin{pmatrix}
1 & 0\\
0 & \frac{1}{\tau_I}
\end{pmatrix}
\begin{pmatrix}
c-1 & -1\\
c & -1
\end{pmatrix}
=
\begin{pmatrix}
    c - 1 & -1 \\
    \frac{c}{\tau_I} & -\frac{1}{\tau_I}
\end{pmatrix},
\end{equation*}
it shows that the fixed point is stable for $c < 1 + \frac{1}{\tau_I}$ and unstable for $c > 1 + \frac{1}{\tau_I}$.
By a Poincaré-Bendixson argument, once the unique fixed point becomes unstable and trajectories remain bounded, the system must admit a periodic orbit.
In this E-I system, the combination of the trajectories emerging from the unstable focus in $L_{\{1\}}$ and those converging to the stable node in $L_\emptyset$ (which does not survive in the E–I TLN) guarantees that the resulting periodic orbit is in fact a stable limit cycle, as illustrated in Figure~\ref{fig:singleton}C–D. This E–I oscillation reappears later in more complex E–I TLNs as a collective phenomenon and serves as a precursor to richer dynamical behaviors, including chaotic and quasi-periodic dynamics.

\subsection{E-I TLNs on $n$-path}
 Before stating the fixed point theorems, we make a simple but important observation: for any E-I TLN, the collection of fixed point supports and e-supports is independent of $\tau_I$ and $\theta$. This explains why the statements of Theorems~\ref{thm:path_supports} and~\ref{thm:cycle_supports} depend only on the underlying graph $G$ and the parameters $a,c$.
 In particular,
 \begin{equation}
     \EIFPe = \FPe(W) = \FPe(G,a,c),
 \end{equation}
 where the last equality follows from the fact that the connectivity matrix $W$ is determined by the underlying graph $G$ and parameters $a,c$. The proof is given in Section~\ref{sec:link_CTLN}.

Based on the seven lemmas stated and proved in Section~\ref{sec:proof_support}, we now present the fixed point structure of E-I TLNs on paths in Theorem~\ref{thm:path_supports}. This theorem gives a complete breakdown of the parameter space and guides us in identifying the different dynamical behaviors, including sequential chaotic oscillations (SCOs). Its proof is given in Section~\ref{sec:proof_support}.
\begin{restatable}[Fixed points for paths]{theorem}{thmone}\label{thm:path_supports}
    Let $G$ be the $n$-path ($1\to2\to\dots \to n$). Then, for any nondegenerate E-I TLN with $a>0$ and $c>0$, the collection of fixed point e-supports $\FPe(G,a,c)$ satisfies:
    \begin{itemize}
        \item[\textnormal{(i)}]  If $c > a+1$(strong inhibition), then $|\FPe(G,a,c)| = 2^{n}-1$ and
        $$
        \FPe(G,a,c) = \{\, \sigma \subseteq [n] \mid \sigma \ne \emptyset \,\},
        $$
        \item[\textnormal{(ii)}]  If  $1 < c < a + 1 $(moderate inhibition), then $|\FPe(G,a,c)| = 1$ and
        $$
        \FPe(G,a,c) = \{\, \{n\} \,\},
        $$
        where $\{n\}$ is the support for the  singleton fixed point.
        \item[\textnormal{(iii)}]  If $0<c<1$(weak inhibition), then $|\FPe(G,a,c)| = 1$ and 
        $$
        \FPe(G,a,c) = \{\, [n] \,\},
        $$
        where $[n]$ is the full support.
    \end{itemize}
\end{restatable}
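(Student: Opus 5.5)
The plan is to prove the theorem by a direct computation of the fixed point equations, using the facts from Section~\ref{sec:link_CTLN} that every fixed point satisfies $x\ge 0$ and $x_I^*>0$, and that it is determined by its e-support. Parts of the argument could instead be phrased through the graph rules (Lemmas~\ref{lem:weakdomination}--\ref{lem:off condition}), but the path is simple enough to handle by hand.

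\textbf{Fixed point equations.} Fix a candidate e-support $\sigma\subseteq[n]$ and set $S=\sum_{j\in\sigma}x_j$. The inhibitory equation forces $x_I=cS$. Writing $u\od\theta-cS$, the on-equations $x_i=y_i$ for $i\in\sigma$ read
\[
(1-c)x_i = a\,x_{i-1}\,[i-1\in\sigma] + u .
\]
The off-conditions for $k\notin\sigma$ read
\[
y_k = a\,x_{k-1}\,[k-1\in\sigma] + u < 0,
\]
with strict inequality by nondegeneracy. Decompose $\sigma$ into maximal runs $\{k,k+1,\dots,m\}$ of consecutive nodes. On each run the values are determined recursively from $x_k=u/(1-c)$. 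So $\sigma\in\FPe(G,a,c)$ if and only if three things hold: there is a $u$ consistent with $u=\theta-cS$; every $x_i$ produced this way is positive; and every off-condition holds. The empty set is excluded at once, because then $S=0$ and $y_k=\theta>0$ for every $k$.

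\textbf{Cases (ii) and (iii).}
\begin{itemize}
\item[(iii)] When $0<c<1$, positivity of $x_k=u/(1-c)$ forces $u>0$. Every off node $k$ then has $y_k\ge u>0$, which violates the off-condition. Hence $\sigma=[n]$. For existence, all $x_i$ equal $u$ times positive constants $t_i$. The self-consistency equation is
\[
u\Bigl(1+\tfrac{c}{1-c}\textstyle\sum_i t_i\Bigr)=\theta,
\]
which has a unique positive solution $u$.
\item[(ii)] When $c>1$, positivity of $x_k$ forces $u<0$. Put $v=-u>0$ and $r=a/(c-1)$, so that $x_k=v/(c-1)$ and $x_{i}=(v-a x_{i-1})/(c-1)$ along a run. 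If $1<c<a+1$ then $r>1$, and the second node of any run would satisfy $x_{k+1}=v(1-r)/(c-1)<0$. So all runs are singletons. For a singleton $\{k\}$ with $k<n$, the off node $k+1$ has $y_{k+1}=a x_k-v=v(r-1)>0$, a contradiction. Hence $\sigma=\{n\}$. Conversely, $\{n\}$ works: every off node has $y=-v<0$, and $v>0$ solves $v\bigl(\tfrac{c}{c-1}-1\bigr)=\theta$.
\end{itemize}

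\textbf{Case (i), the main step.} Now $c>a+1$, so $0<r<1$, and we must show that every nonempty $\sigma$ is realized. Along a run of length $\ell$,
\[
x_{k+j}=\frac{v}{c-1}\,s_j,\qquad s_0=1,\quad s_j=1-r\,s_{j-1}.
\]
The key claim is that $0<s_j\le 1$ for all $j$. This follows by induction: if $0<s_{j-1}\le 1$ and $0<r<1$, then $s_j=1-rs_{j-1}\in[1-r,1)\subset(0,1)$. Positivity of the run then follows. The off-conditions also follow:
\begin{itemize}
\item an off node right after a run has $y=a x_{m}-v\le a\cdot\frac{v}{c-1}-v=(r-1)v<0$;
\item any other off node has $y=-v<0$.
\end{itemize}
Finally, self-consistency becomes $v\bigl(\frac{c}{c-1}\Sigma-1\bigr)=\theta$, where $\Sigma$ is the sum of all the $s_j$ over the runs. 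Since $\Sigma\ge 1$ and $c/(c-1)>1$, the bracket is positive, so $v>0$ exists and is unique. This gives all $2^n-1$ nonempty subsets.

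I expect this case to be the main obstacle, specifically the uniform bound $0<s_j\le1$ for the alternating recursion. It is exactly the condition $c>a+1$ (that is, $r<1$) that simultaneously keeps the runs positive and keeps the node after each run switched off.
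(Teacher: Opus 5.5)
Your proof is correct, but it takes a more hands-on route than the paper. The paper assembles the theorem from general graph rules. For (i) it combines Lemma~\ref{lem:paths} (positivity of the on-neuron solution for unions of paths) with Lemma~\ref{lem:off condition} (off-neuron condition via the largest active coordinate, using $d_k\le 1$). For (ii) it repeatedly deletes dominated nodes using the imported domination result, Lemma~\ref{lem:domination}, and then confirms $\{n\}$ with the uniform in-degree lemmas. For (iii) it gets $[n]$ from Lemma~\ref{lem:paths} and excludes everything else by applying weak domination, Lemma~\ref{lem:weakdomination}, twice along the path.

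In case (i) your computation is essentially the paper's. Your $v/(c-1)$ is its $p=(\theta-x_I^*)/(1-c)$, your $s_j=1-rs_{j-1}$ is the partial geometric sum $1+q+\cdots+q^j$ with $q=-r$, and your bound $s_j\le 1$ replaces the max-node trick. The only difference is that you obtain $v>0$ by solving the scalar self-consistency equation explicitly, whereas the paper assumes the solution exists and reads off the sign of $p$.

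Cases (ii) and (iii) are genuinely different: you replace the domination machinery with short sign arguments. In (ii), $r>1$ makes the second node of any run negative, and a singleton run at $k<n$ switches on $k+1$. In (iii), positivity of the first node of any run forces $u>0$, which then switches on every off node.

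Your approach is self-contained and never actually uses nondegeneracy. Your exclusions violate even the non-strict condition $y_k\le 0$, and your existence claims are explicit solutions, so the strict off-inequality you attribute to nondegeneracy is unnecessary. The paper's approach instead yields reusable lemmas that immediately handle cycles (Theorem~\ref{thm:cycle_supports}) and other graphs.

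One cosmetic slip: in (iii), if $x_i=u\,t_i$, the self-consistency equation is $u\bigl(1+c\sum_i t_i\bigr)=\theta$. Your factor $\tfrac{c}{1-c}$ belongs to the normalization $x_i=\tfrac{u}{1-c}t_i$; either way the bracket is positive.
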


\begin{figure}
    \centering
    \includegraphics[width=0.95\linewidth]{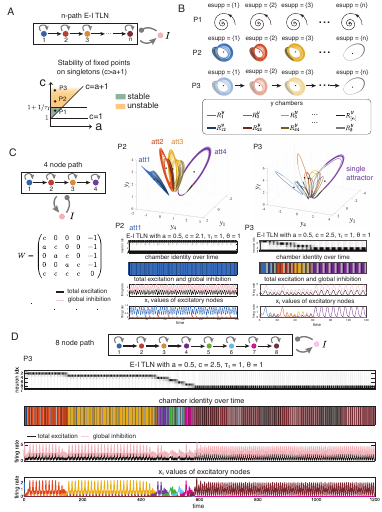}
     \caption{Emergent dynamics under strong inhibition for an E-I TLN whose underlying graph is a directed $n$-path.
    (A) Network architecture satisfying Dale's law, together with the stability conditions for singleton fixed points in the strong inhibition regime.
    (B) Three dynamical states (shown by P1, P2, and P3) in the strong inhibition regime.
    Colors in each attractor indicate different chambers, each governed by a linear subsystem. The final attractor around node $n$ is drawn in black, since the color associated with node $n$ is undetermined.
    (C) Illustration of P2 and P3 in a $4$-path E-I TLN. Firing rates over time are shown in grayscale, and trajectories are shown in $y$-space.
    (D) Illustration of P3 in an $8$-path E-I TLN.}
    \label{fig:npath_high_inhibition}
\end{figure}

Recall that $[n]=\{1,2,\dots,n\}$, while $\{n\}$ denotes the singleton set containing the excitatory node $n$. The cases $c=a+1$ and $c=1$ are excluded here to avoid degeneracies; the reason for excluding these parameter values can already be seen in the $2$-path E-I TLN (see Supplement Section~\ref{sec:2-path}). From Theorem~\ref{thm:path_supports} it is convenient to denote the three parameter regions $c>a+1$, $1<c<a+1$, and $0<c<1$ as the strong, moderate, and weak inhibition regimes, respectively. All three regimes require $a>0$. The moderate inhibition regime corresponds to the CTLN regime and therefore has the same fixed-point structure as CTLNs.

For the $n$-path ($n\ge 2$), sequential chaotic oscillations are observed around unstable singleton fixed points in the strong inhibition regime $c>a+1$. By Lemma~\ref{lem:unique fp}, the stability of a singleton fixed point with support $\{k,I\}$ is determined by the reduced Jacobian 
\begin{equation*}
J =
\begin{pmatrix}
1 & 0\\
0 & \frac{1}{\tau_I}
\end{pmatrix}
\begin{pmatrix}
c-1 & -1\\
c & -1
\end{pmatrix}
=
\begin{pmatrix}
    c - 1 & -1 \\
    \frac{c}{\tau_I} & -\frac{1}{\tau_I}
\end{pmatrix},
\end{equation*}
whose eigenvalues have already been analyzed in the singleton E-I TLN.  This yields a simple stability criterion in E-I TLNs: any fixed point with e-support as a singleton is stable when $c < 1 + 1/\tau_I$ and becomes unstable when $c > 1 + 1/\tau_I$. 

Depending on the stability of the singleton fixed points, the strong inhibition regime is divided into two regions. Within these two regions, however, the system exhibits three distinct dynamical states, illustrated by the sample points P1, P2, and P3 in Figure~\ref{fig:npath_high_inhibition}. At P1 ($c<1+1/\tau_I$), all singleton fixed points are stable foci. At P2 ($c>1+1/\tau_I$), the singleton fixed points become unstable and give rise to $n$ dynamic attractors, each associated with a singleton fixed point. The phase potraits of these attractors are shown in $y$-space rather than $x$-space in order to reveal their full structure (see Figure~\ref{fig:2path} of the $2$-path E-I TLN for a comparison of the two representations). Moreover, the attractors are colored by chambers, with each chamber governed by a linear subsystem. Although each chaotic attractor shown in Figure~\ref{fig:npath_high_inhibition}B and C traverses only two chambers, in other cases chaotic attractors may span more than two chambers while preserving the same M\"obius-like geometric structure (see Figure~\ref{fig:2path_chaos}).

P3 lies in the same parameter region ($c>1+1/\tau_I$) as P2, but moving from P2 to P3, a clear bifurcation occurs around the first $n-1$ singleton fixed points, where the corresponding chaotic attractors become attractor ruins. These attractor ruins transiently capture trajectories and decay sequentially in the path order determined by the graph, $1\to 2\to \cdots \to n$. As a result, the system ultimately converges to the attractor associated with the terminal node $n$. We refer to this behavior as sequential chaotic oscillations (SCOs), since it consists of chaotic dynamics driven by E-I oscillations, with transition order determined by the graph. 

SCOs become clearer in longer paths. Examples for the $4$-path and $8$-path E-I TLNs are shown in Figure~\ref{fig:npath_high_inhibition}C and \ref{fig:npath_high_inhibition}D. Compared with the $4$-path case, the $8$-path E-I TLN exhibits more heterogeneous chaotic dynamics: the flow spends significantly longer times near singleton fixed points $2$ and $3$ than near the other nodes. These extended oscillatory episodes resemble bursting behavior, with waveforms that are qualitatively distinct from the shorter oscillations observed near the remaining nodes. Notably, SCOs are sensitive to initial conditions, yet the order of transitions remains fixed, as shown in Figure~\ref{fig:background}C.

For completeness, the fixed points and associated attractors in the moderate and weak inhibition regimes are shown in the supplement (Figure~\ref{fig:npath_other_inhibition}). In both regimes, there is a unique fixed point. When this fixed point loses stability, the system transitions to singleton E--I oscillations in the moderate inhibition regime and to global E--I oscillations in the weak inhibition regime. Interestingly, in both dynamical states of the weak inhibition regime---whether at a steady state or in an E--I oscillatory state---the relative activity levels are shaped by the asymmetry of the underlying graph: nodes closer to the terminal node exhibit higher activity. 

\subsection{E-I TLNs on $n$-cycle}
As in the path case, Theorem~\ref{thm:cycle_supports} is obtained from the seven lemmas in Section~\ref{sec:proof_support}. It gives the fixed point structure of E-I TLNs on cycles and the corresponding breakdown of parameter regimes, which helps us understand the different dynamical behaviors. Its proof is given in Section~\ref{sec:proof_support}.

\begin{restatable}[Fixed points for cycles]{theorem}{thmtwo}\label{thm:cycle_supports}
    Let $G$ be the $n$-cycle with $n \geq 3$ ($1\to2\to\dots \to n \to 1$). Then, for any nondegenerate E-I TLN with $a>0$, $c>0$, the collection of fixed point e-supports $\FPe(G,a,c)$ satisfies:
    \begin{itemize}
        \item[\textnormal{(i)}]  If $c > a+1$(strong inhibition), then $|\FPe(G,a,c)| = 2^n-1$ and
        $$
        \FPe(G,a,c) = \{\, \sigma \subseteq [n] \mid \sigma \ne \emptyset \,\},
        $$
        \item[\textnormal{(ii)}]  If $\tfrac{a - 1}{n-1} < c < a + 1 $, then $|\FPe(G,a,c)| = 1$ and the unique fixed point support is the full support, 
        $$
        \FPe(G,a,c) = \{\, [n] \,\},
        $$
        where $[n] = \{1,2,\dots,n\}$.
        \item[\textnormal{(iii)}]  If $c \leq \tfrac{a - 1}{n-1}$, then 
        $$
        |\FPe(G,a,c)| = 0.
        $$
    \end{itemize}
\end{restatable}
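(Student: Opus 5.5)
The plan is to compute every candidate fixed point directly, one e-support $\sigma\subseteq[n]$ at a time. By Lemma~\ref{lem:unique fp}, a nondegenerate E-I TLN has at most one fixed point for each $\sigma$, so it suffices to decide, for each $\sigma$, whether the linear system on $\sigma\cup\{I\}$ has a solution satisfying the on- and off-conditions. The graph rules of Section~\ref{sec:proof_support} (Lemmas~\ref{lem:uniform_indegree_on}, \ref{lem:uniform_indegree_off}, \ref{lem:paths}, \ref{lem:off condition}) package exactly these computations, but I will set them up by hand to see the parameter thresholds.

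\emph{Setup.} For $\sigma$, write $S=\sum_{j\in\sigma}x_j$, so $x_I=cS$, and set $u=cS-\theta$.
\begin{itemize}
\item On-conditions: for $i\in\sigma$ we need $(1-c)x_i = a\,x_{i-1}\mathbf{1}[i-1\in\sigma]-u$ with $x_i>0$.
\item Off-conditions: for $k\notin\sigma$ we need $y_k=a\,x_{k-1}\mathbf{1}[k-1\in\sigma]-u<0$, strict by nondegeneracy.
\end{itemize}
Since $y_k=\theta>0$ when $\sigma=\emptyset$, the empty set is never an e-support.

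\emph{Proper subsets.} For $\emptyset\ne\sigma\subsetneq[n]$, $\sigma$ splits into disjoint path components $i_1\to\cdots\to i_m$ of the cycle. The on-conditions give $x_{i_1}=-u/(1-c)$ and the recursion $x_{i_{l+1}}=(a x_{i_l}-u)/(1-c)$.

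If $c>a+1$, then $u>0$ is forced. The recursion becomes $x_{i_{l+1}}=r\,x_{i_l}$ with $r=(c-1-a)/(c-1)\in(0,1)$, so all entries are positive multiples of $u/(c-1)$. Substituting into $u=cS-\theta$ gives $u(cA-1)=\theta$, where $A\ge 1/(c-1)$ is the sum of the geometric weights. Since $cA\ge c/(c-1)>1$, we get $u>0$, as required.

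The off-conditions then hold in this regime:
\begin{itemize}
\item If $k-1\notin\sigma$, then $y_k=-u<0$.
\item If $k-1$ is the end of a component, then $y_k=u\bigl(a r^{m-1}/(c-1)-1\bigr)<0$, because $a<c-1$.
\end{itemize}
So every nonempty proper subset is an e-support when $c>a+1$.

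If $1<c<a+1$, then $r<0$, so any component of length $\ge2$ has a negative second entry. For a singleton component, the next node satisfies $y_k=u\bigl(a/(c-1)-1\bigr)>0$. Hence no proper subset survives.

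If $c<1$, the first entry forces $u<0$, and then every entry is positive. But the node after any component has $y_k=a x_{\text{end}}-u>0$, so again no proper subset survives. The case $c=1$ is excluded by nondegeneracy.

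\emph{Full support.} For $\sigma=[n]$ the system is circulant. By uniqueness the solution must be the symmetric one, $x_i=u/(c-1+a)$. Solving $u=cnx-\theta$ gives
\[
x_i=\frac{\theta}{c(n-1)-(a-1)},
\]
which is positive if and only if $c>\tfrac{a-1}{n-1}$. There are no off-conditions to check. When $c>a+1$ this inequality holds automatically, since $c(n-1)>c>a+1>a-1$.

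\emph{Assembling the regimes.}
\begin{itemize}
\item Case (i), $c>a+1$: all $2^n-1$ nonempty subsets are e-supports.
\item Case (ii), $\tfrac{a-1}{n-1}<c<a+1$: only $[n]$ survives. This covers both the $1<c$ and $c<1$ sub-cases.
\item Case (iii), $c\le\tfrac{a-1}{n-1}$: no e-support survives.
\end{itemize}

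\emph{Main obstacle.} I expect the hardest part to be carefully justifying the sign bookkeeping for the proper subsets in the regimes $c<1$ and $1<c<a+1$, and treating the boundary $c=\tfrac{a-1}{n-1}$. At that boundary the full-support system is singular, so I would check whether it is a degenerate case excluded by the nondegeneracy hypothesis or whether it simply yields no positive solution. I would resolve this by invoking the nondegeneracy definition in Section~\ref{sec:link_CTLN}.
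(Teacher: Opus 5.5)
Your proposal is sound in outline and takes a genuinely different route from the paper. One step in case (i) is wrong as written and must be repaired.

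\textbf{Comparison.} The paper argues modularly through graph rules:
the full support via Lemma~\ref{lem:uniform_indegree_on};
case (i) via Lemma~\ref{lem:paths} (on-neuron) together with Lemma~\ref{lem:off condition} (off-neuron, since $d_k\le 1$);
the regime $1<c<a+1$ via the imported domination result (Lemma~\ref{lem:domination}), which forces a proper support to be an independent set, followed by Lemma~\ref{lem:uniform_indegree_off} with $d=0$, $d_k=1$;
and the regime $c<1$ via weak domination (Lemma~\ref{lem:weakdomination}), propagated backwards around the cycle.

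You instead parametrize every path component by the single scalar $u=x_I-\theta$ and do one sign analysis. The pieces correspond:
your ``negative second entry'' is the concrete content of the domination step;
your singleton-component computation $y_k=(a-c+1)x_{i_1}$ is exactly Lemma~\ref{lem:uniform_indegree_off};
and your $c<1$ argument is the contrapositive of weak domination.
Your version is self-contained, since it avoids the external domination theorem, and it is uniform across regimes. The paper's version yields reusable rules valid for arbitrary graphs, not just paths and cycles.

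\textbf{The error.} For $c>a+1$ the recursion is affine, not geometric. Set $s=a/(c-1)\in(0,1)$ and $p=u/(c-1)$. The on-equations give $x_{i_{l+1}}=p-s\,x_{i_l}$, hence
\[
x_{i_l}=p\,w_l,\qquad w_l=\sum_{k=0}^{l-1}(-s)^k=\frac{1-(-s)^l}{1+s},
\]
not $p\,r^{l-1}$. The two agree only for $l\le 2$; for $l=3$ you get $1-s+s^2$, not $(1-s)^2$. This is the partial-geometric-sum formula of Lemma~\ref{lem:paths} with $q=-s$.

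Everything you need still holds, because $w_1=1$ and $0<w_l\le 1$ for all $l$:
\begin{itemize}
\item $\sum_{j\in\sigma}x_j=uA$ with $A=(c-1)^{-1}\sum w_l\ge (c-1)^{-1}$, so $u=\theta/(cA-1)>0$.
\item At the successor $k$ of a component end, $y_k=u\,(s\,w_m-1)<0$, because $s\,w_m\le s<1$.
\end{itemize}
Your argument for $1<c<a+1$ only uses $x_{i_2}=r\,x_{i_1}$, which is correct.

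\textbf{Boundary cases.} Your stated obstacle resolves as you suspected. We have $(I-W_{[n]\cup\{I\}})(\mathbf{1}_n,nc)^\top=\bigl((n-1)c-a+1\bigr)(\mathbf{1}_n,0)^\top$. So at $c=\frac{a-1}{n-1}$ this matrix is singular, the first nondegeneracy condition fails, and the equality case of (iii) is vacuous.

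Your claim that $c=1$ is degenerate is also true but deserves a line of proof. For $\sigma=\{1,2\}$, the Cramer matrix for node $1$ has two identical rows $(\theta,0,1)$, so the second nondegeneracy condition fails. The paper simply sets $c\neq 1$ without comment.
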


\begin{figure}
    \centering
    \includegraphics[width=1\linewidth]{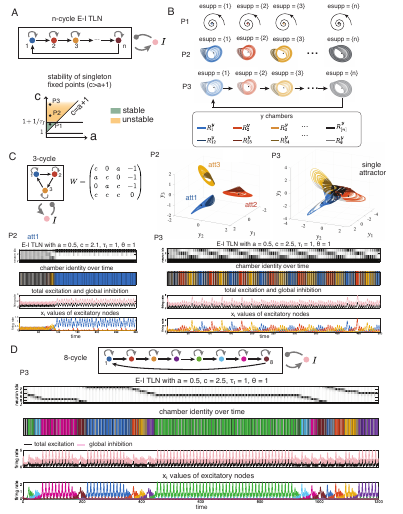}
    \caption{Emergent dynamics around singleton fixed points for an E-I TLN whose underlying graph is a directed $n$-cycle.
    (A) Network architecture and stability conditions for singleton fixed points.
    (B) Three parameter choices associated with singleton fixed points.
    \textbf{P1}: All $n$ singleton fixed points are stable.
    \textbf{P2}: Each singleton fixed point loses stability and generates a chaotic attractor, yielding $n$ chaotic attractors in total.
    \textbf{P3}: Sequential chaotic oscillations: all chaotic attractors lose stability and decay sequentially along the cycle $(1 \to 2 \to \cdots \to n \to 1)$.
    Colors indicate chambers of the E-I TLN state space, with each chamber corresponding to a distinct linear subsystem.
    (C) Illustration of P2 and P3 for a $4$-cycle E-I TLN.
    (D) Illustration of P3 for an $8$-cycle E-I TLN.}
    \label{fig:ncycle_high_inhibition}
\end{figure}

The fixed-point structure of the $n$-cycle E-I TLN differs from that of the path case in the moderate and weak inhibition regimes. When $0<c<(a-1)/(n-1)$, no fixed point exists and the activity blows up. In the remaining parameter regions, the observed attractors arise only near singleton and full-support fixed points. Around singleton fixed points, we find chaotic attractors and SCOs, while around the full-support fixed point, we observe E-I oscillations, CTLN-like oscillations, and flower-like quasi-periodic attractors.

\paragraph{Singleton fixed points.}
We begin with singleton fixed points. As in the path case, the system exhibits three distinct types of dynamics in the two parameter regions $c < 1 + 1/\tau_I$ and $c > 1 + 1/\tau_I$. Representative examples are shown at P1, P2, and P3 in Figure~\ref{fig:ncycle_high_inhibition}. The dynamics at P1 and P2 are analogous to those observed in the $n$-path case. The main difference is that the chaotic attractor around the singleton $\{n\}$ more closely resembles a Möbius ring, due to the additional connection from node $n$ to node $1$.

At P3, in contrast to the unique chaotic attractor around the singleton fixed point $\{n\}$ in the path case, the E-I TLN on cycles exhibits high-dimensional chaos associated with all $n$ singleton fixed points. This occurs because the chaotic attractor associated with node~$n$ in the $n$-cycle case also becomes an attractor ruin due to the bifurcation. As a result, the attractor ruins decay cyclically through all nodes,
$1 \;\to\; 2 \;\to\; \cdots \;\to\; n \;\to\; 1$,
as illustrated in Figure~\ref{fig:ncycle_high_inhibition}. This switching process never terminates, indicating that the sequence of transitions among attractor ruins itself forms a high-dimensional chaotic dynamic. This behavior is reminiscent of the Lorenz attractor~\cite{lorenz2017deterministic}: whereas the Lorenz attractor is a butterfly with two wings, the chaotic attractor in the $3$-cycle E-I TLN is a butterfly with three wings, as shown in Figure~\ref{fig:ncycle_high_inhibition}C. In this sense, the Lorenz attractor can be interpreted as cycling between two metastable states, or equivalently, two attractor ruins.
In Figure~\ref{fig:ncycle_high_inhibition}D, we show SCOs for the $8$-cycle E-I TLN, where the dwell times at individual nodes vary significantly and irregularly. Moreover, SCOs do not require fine-tuning: Figure~\ref{fig:3cycle_sco} shows additional $(a,c)$ parameter choices for which SCOs occur in the $3$-cycle E-I TLN.

\paragraph{The full-support fixed point (z-mode and mean mode analysis).}

Here we focus on the full-support fixed point, whose excitatory support is the full set $[n]$ and which exists for
$c > (a-1)/(n-1)$. By analyzing its Jacobian matrix, we find that this fixed point is stable if
$$
c < 1 - a \cos\!\left(\tfrac{2\pi}{n}\right)
\quad \text{and} \quad
a + c < 1 + \tfrac{1}{\tau_I}.
$$
Outside this stability region ($c > 1 - a \cos\tfrac{2\pi}{n}$ or $a + c > 1 + \tfrac{1}{\tau_I}$), 
it becomes unstable and gives rise to multiple attractors. To distinguish the corresponding parameter regions, it is useful to separate two complementary aspects of the network activity. The first captures relative differences among the excitatory nodes and is essential for identifying sequential or asymmetric activity patterns. The second describes the collective activity of the excitatory population together with the inhibitory neuron, and governs globally synchronized oscillations. These two components are formalized as the \emph{$z$-mode} and the \emph{mean mode}.

\begin{definition}[$z$-mode and mean mode]
Consider an E-I TLN with $n$ excitatory nodes and one inhibitory node. We define the difference variables
$$
z_j = x_{j+1} - x_j, \quad j = 1, \dots, n-1,
$$
and the mean excitatory variable 
$$
x_E = \sum_{j=1}^{n} x_j.
$$
The $z$-mode is the subsystem governing the dynamics of $z = (z_1, \dots, z_{n-1})$, and the mean mode is the subsystem governing the dynamics of the mean variables $(x_E, x_I)$.
\end{definition}
\begin{remark}
The variable $z_n=x_1-x_n$ is not included, since
$z_n=-\sum_{j=1}^{n-1} z_j$. 
For example, for a $3$-cycle E--I TLN
(Figure~\ref{fig:cycle_full_support}B), the $z$-mode is described by
$z_1=x_2-x_1$ and $z_2=x_3-x_2$. 
The remaining difference variable $z_3=x_1-x_3$ is omitted because it is
determined by $z_3=-z_1-z_2$.

The terminology for the mean mode is motivated by a mean-field perspective.
However, $x_E$ is not normalized by $n$, and hence represents the total
excitatory activity rather than the average excitatory activity. This choice
allows for a more direct comparison with the total inhibitory activity.
\end{remark}

Although the definition of the $z$-mode depends on the ordering of the nodes in $\sigma$, the choice of ordering does not affect the local behavior or the corresponding stability conclusions. Moreover, the $z$-mode and the mean mode decouple in the full-support chamber when the graph $G$ has uniform in-degree and out-degree. Details and proofs concerning the decoupling are given in Section~\ref{sec:z-mode}. Since any $n$-cycle has uniform in-degree $1$ and out-degree $1$, the $z$-mode and mean mode decouple in the chamber $R_{[n]}$, so that the full-support fixed point decomposes into a fixed point $z^*=0$ of the $z$-mode and a fixed point $(x_E^*,x_I^*)$ of the mean mode. Moreover, they also decouple in the chamber $R_\emptyset$.Even though the full system has no fixed point in the chamber $R_{\emptyset}$, the $z$-mode subsystem still has a fixed point there. These properties are summarized in the following theorem.
\begin{restatable}{theorem}{thmthree}\label{thm:cycle_2_modes}
Let $G$ be an $n$-cycle ($n\geq 3$) and suppose it is the underlying graph of a nondegenerate E-I TLN. Then the corresponding $z$-mode and mean mode decouple in the chambers $R_{[n]}$ and $R_\emptyset$. 
Moreover, if there exists a full-support fixed point $x^*$, then
\begin{itemize}
    \item In the chamber $R_{[n]}$, $z$-mode has a fixed point $z^* = 0$, which is stable when $c < 1 - a\cos\!\left(\tfrac{2\pi}{n}\right)$ and unstable when $c > 1 - a\cos\!\left(\tfrac{2\pi}{n}\right)$. 
    \item In the chamber $R_{[n]}$, mean mode has a fixed point $(x^*_E,x^*_I) = (\sum_{i=1}^{n} x_i^*,x^*_I)$, which is stable when $a+c < 1 + \tfrac{1}{\tau_I}$ and unstable when $a+c > 1 + \tfrac{1}{\tau_I}$.
    \item In the chamber $R_\emptyset$, $z$-mode has a fixed point $z^* = 0$, which is always stable.
\end{itemize}
\end{restatable}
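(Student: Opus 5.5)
The plan is to write the linear subsystems governing the chambers $R_{[n]}$ and $R_\emptyset$ explicitly for the $n$-cycle. Then I will apply the linear change of variables $x \mapsto (z, x_E, x_I)$ and check that the resulting equations split into two blocks. Stability then reduces to eigenvalue computations: a circulant spectrum for the $z$-mode and a $2\times 2$ trace/determinant test for the mean mode.

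\emph{Step 1 (chamber $R_{[n]}$).} In $R_{[n]}$ all $y_i>0$, and $y_I = c\,x_E \ge 0$. With indices taken mod $n$, the dynamics are
$$\dot x_i = (c-1)x_i + a x_{i-1} - x_I + \theta,\qquad \tau_I \dot x_I = -x_I + c\,x_E .$$
Subtracting consecutive equations, the common terms $-x_I+\theta$ cancel, because each node receives the same inhibition and input. This gives
$$\dot z_j = (c-1) z_j + a z_{j-1},\qquad z_0 \equiv z_n = -\textstyle\sum_{k=1}^{n-1} z_k,$$
which is a closed linear system in $z$ alone. Summing over $i$ uses the uniform in-degree $1$, so that $\sum_i x_{i-1} = x_E$, and gives
$$\dot x_E = (a+c-1)x_E - n x_I + n\theta,\qquad \tau_I\dot x_I = -x_I + c x_E ,$$
which is closed in $(x_E,x_I)$. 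This establishes the decoupling.

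Since the full-support fixed point exists, it is invariant under the cyclic symmetry of the cycle. By nondegeneracy and uniqueness of the fixed point with a given e-support (Lemma~\ref{lem:unique fp}), all its coordinates $x_i^*$ are therefore equal, so $z^*=0$. The mean-mode fixed point is then $(\sum_i x_i^*, x_I^*)$.

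\emph{Step 2 (stability of the $z$-mode).} The excitatory block of the system is $M=(c-1)I + aP$, where $P$ is the cyclic shift. This matrix is circulant with eigenvalues $c-1+a\omega^k$, where $\omega=e^{2\pi i/n}$. The difference map $D:\mathbb{R}^n\to\mathbb{R}^{n-1}$, $x\mapsto z$, is surjective and satisfies $D M = M_z D$. Its kernel is spanned by $\mathbf 1$, which is the $k=0$ eigenvector of $M$. Hence $M_z$ is the induced map on the quotient $\mathbb{R}^n/\langle \mathbf 1\rangle$, and its spectrum is $\{c-1+a\omega^k: k=1,\dots,n-1\}$.

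The real parts are $c-1+a\cos(2\pi k/n)$, which are maximized at $k=1$ and $k=n-1$. So $z^*=0$ is stable iff $c<1-a\cos(2\pi/n)$, and unstable when the reverse strict inequality holds.

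\emph{Step 3 (stability of the mean mode).} The Jacobian of the mean mode is
$$\begin{pmatrix} a+c-1 & -n\\ c/\tau_I & -1/\tau_I\end{pmatrix}.$$
Its determinant is $\big((n-1)c-(a-1)\big)/\tau_I$, which is positive exactly when $c>(a-1)/(n-1)$. That inequality is precisely the existence condition for the full-support fixed point from Theorem~\ref{thm:cycle_supports}. Given a positive determinant, stability is decided by the trace $a+c-1-1/\tau_I$, which yields the stated criterion $a+c<1+1/\tau_I$.

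\emph{Step 4 (chamber $R_\emptyset$).} In $R_\emptyset$ all $y_i\le 0$, so $\dot x_i=-x_i$. This gives $\dot z=-z$ and $\dot x_E=-x_E$, together with $\tau_I\dot x_I=-x_I+cx_E$. The two modes are again decoupled. The $z$-mode has the fixed point $0$ with all eigenvalues equal to $-1$, so it is always stable, even though the full system has no fixed point in this chamber.

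The main point needing care is Step 2, namely justifying that the $z$-mode spectrum is exactly the circulant spectrum with the $k=0$ mode removed. I will do this via the intertwining relation $DM=M_zD$ together with $\ker D=\langle\mathbf 1\rangle$ being $M$-invariant. A secondary point is Step 1's claim $z^*=0$, which rests on symmetry plus uniqueness. As a fallback, solving the linear fixed-point equations directly gives equal coordinates $x_i^*=x^*$ with $(a+c-1)x^*-x_I^*+\theta=0$ and $x_I^*=ncx^*$. Nondegeneracy is what excludes the boundary equalities in the stability criteria.
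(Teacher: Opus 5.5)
Your proof is correct, and apart from one step it follows the paper: the same chamber-wise linear systems, the same cancellation of $-x_I+\theta$ in the differences, the same trace/determinant test with $\det=((n-1)c-a+1)/\tau_I>0$ supplied by the existence condition, and the same trivial computation in $R_\emptyset$. You differ in how you get the $z$-mode spectrum. The paper writes the non-circulant matrix $M_{n-1}=(c-1)I_{n-1}+aB$ and checks by hand that $v_k=(\omega_k^{n-2},\dots,1)^\top$ satisfies $Bv_k=\omega_k v_k$ for $k=1,\dots,n-1$. You instead use $DM=M_zD$ to identify $M_z$ with the map induced by the circulant $(c-1)I+aP$ on $\mathbb{R}^n/\langle\mathbf 1\rangle$. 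The paper's version gives explicit eigenvectors in $z$-coordinates. Yours explains the paper's footnote, since the loss of circulant structure is only a coordinate artifact. It also makes independence from the choice of difference variables immediate, which the paper only asserts. Finally, it works verbatim for any $G$ with uniform in-degree $d$: there $W_{[n]}\mathbf 1=(c+da)\mathbf 1$, and the $z$-mode spectrum is $\spec(-I+W_{[n]})$ with one copy of $c-1+da$ removed. Your symmetry-plus-uniqueness argument for $z^*=0$ is also fine, and your fallback is exactly Lemma~\ref{lem:uniform_indegree_on}. Two minor slips, neither affecting the result. First, closing the $x_E$ equation in general uses uniform out-degree (column sums of $W_{[n]}$), not in-degree, although for the cycle the reindexing $\sum_i x_{i-1}=x_E$ suffices. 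Second, nondegeneracy does not exclude the boundary cases. It never involves $\tau_I$, so it cannot rule out $a+c=1+1/\tau_I$, where the eigenvalues are purely imaginary. This is harmless, since the theorem makes no claim on the boundaries.
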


From Theorem~\ref{thm:cycle_2_modes}, the original system decomposes into $z$-mode and mean mode in the chamber $R_{[n]}$. 
Here, decoupling means that the equations for $z=(z_1,\dots,z_{n-1})$ involve only $z$, while the equations for $(x_E,x_I)$ involve only $(x_E,x_I)$, so the two subsystems can be analyzed independently within the chamber.
As a consequence, the full-support fixed point of the E-I TLN is stable if and only if both subsystems admit stable fixed points, namely when $c < 1 - a\cos(2\pi/n)$ and $a+c < 1 + 1/\tau_I$. It therefore remains to analyze the dynamics arising when one of these two modes loses stability.

\begin{figure}
    \centering
    \includegraphics[width=1\linewidth]{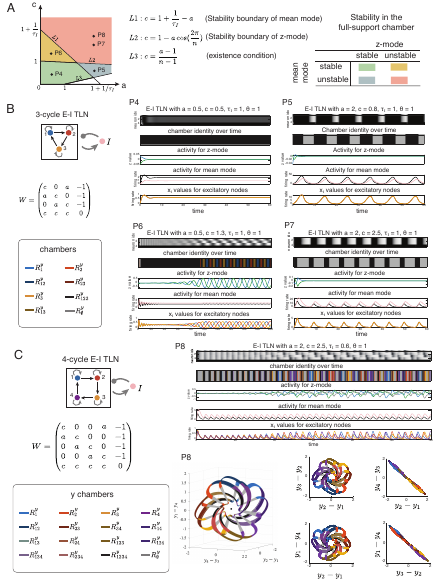}
   \caption{
    Emergent dynamics around the full-support fixed point for an E-I TLN whose underlying graph is an $n$-cycle.
    (A) Four regions determined by the stability conditions of the $z$-mode and mean mode fixed points in the chamber $R_{[n]}$.
    (B) Illustration of the four parameter choices (P4-P7) for the E-I TLN on a $3$-node cycle, corresponding to the four regions shown in panel~A. 
    (C) Flower-like quasi-periodic bahavior in a 4-cycle E–I TLN when both the $z$-mode and the mean mode are unstable.
    }
    
    \label{fig:cycle_full_support}
\end{figure}

First, consider the case where the mean mode becomes unstable while the $z$-mode remains stable.
By Theorem~\ref{thm:cycle_2_modes}, the $z$-mode converges to the fixed point $z^* = 0$ in both chambers $R_{[n]}$ and $R_\emptyset$.
As shown in P5 of 3-cycle E-I TLN in Figure~\ref{fig:cycle_full_support}B, the instability of the mean mode gives rise to E–I oscillations that traverse between the chambers $R_{[n]}$ and $R_\emptyset$.
Since the $z$-mode is stable in both chambers, all excitatory nodes share identical activity levels throughout the oscillation, resulting in synchronized excitatory dynamics.

Next, consider the complementary case, in which the mean mode remains stable while the $z$-mode becomes unstable.
In this regime, the difference variables $z_j$ associated with excitatory activity exhibit oscillatory behavior, while the mean activity stays near its equilibrium value.
This leads to CTLN-like oscillations characterized by sequential activation patterns among the excitatory nodes.
The small fluctuations in the mean activity observed in P6 of Figure~\ref{fig:cycle_full_support}B arise from the system switching between different chambers, rather than from an intrinsic instability of the mean mode itself.

Finally, when both the mean mode and the $z$-mode become unstable, the resulting dynamics are more complex.
In one scenario (P7 in Figure~\ref{fig:cycle_full_support}B), the instability of the mean mode generates E–I oscillations that traverse across the chambers $R_{[n]}$ and $R_\emptyset$.
Although the fixed point $z^* = 0$ of the $z$-mode is unstable in the chamber $R_{[n]}$—so that the $z$-mode initially diverges from zero—it remains stable in the chamber $R_\emptyset$.
As a result, if the trajectory enters $R_\emptyset$, the $z$-mode can converge back to $z^* = 0$.
Once the trajectory reaches $z^* = 0$, it remains there, since $z^* = 0$ is still a fixed point of the $z$-mode in the chamber $R_{[n]}$, even though it is unstable in this chamber. Consequently, the system exhibits synchronized E–I oscillations, similar to those observed in P5.

In contrast, a different behavior (P8), which is also an E-I oscillation, is observed for the $4$-cycle E-I TLN even in the same parameter region as P7, where both modes are unstable. In this case, the $z$-mode does not converge to $z^*=0$ sufficiently rapidly when the trajectory passes through the chamber $R_\emptyset$. As a result, the $z$-mode alternates between phases of convergence and divergence under the E-I oscillation, giving rise to quasi-periodic dynamics; see P8 in Figure~\ref{fig:cycle_full_support}. The resulting attractor also has an interesting geometry. For the coordinate differences $y_2-y_1$, $y_3-y_2$, $y_4-y_3$, and $y_1-y_4$, the relation between $y_2-y_1$ and $y_4-y_3$ is nearly linear, suggesting that $y_4+y_2 \approx y_3+y_1$ is approximately conserved along the trajectory. A similar relation holds between $y_3-y_2$ and $y_1-y_4$. Other combinations of coordinate differences reveal a distinctive flower-like structure underlying the E-I oscillations, and we therefore refer to it as a flower-like attractor.

\subsection{Effect of $\tau_I$ on E-I TLN dynamics}

Here we investigate how $\tau_I$ shapes the dynamics of E-I TLNs, since all of our theoretical results depend on the inhibitory timescale $\tau_I$. To this end, we consider the limiting cases $\tau_I \to 0$ and $\tau_I \to \infty$ in Figure~\ref{fig:effect_tau_I}. For singleton fixed points in the strong and moderate inhibition regimes, the limit $\tau_I \to 0$ stabilizes all singleton fixed points, thereby eliminating the surrounding dynamical attractors. In contrast, the limit $\tau_I \to \infty$ destabilizes all singleton fixed points. For the full-support fixed point in the $n$-cycle case, the limit $\tau_I \to 0$ ensures that the mean mode remains stable. As a result, once the $z$-mode becomes unstable and the trajectory does not blow up, the system exhibits CTLN-like oscillations, as also observed in~\cite{curto2025graphical}. This mirrors the behavior of the matched CTLN shown in Figure~\ref{fig:effect_tau_I}C. By contrast, when $\tau_I \to \infty$, the CTLN-like oscillations disappear, and the system instead exhibits E-I oscillations.

Finally, we summarize the long-term behaviors of E-I TLNs on $n$-paths and $n$-cycles across inhibition regimes and the inhibitory timescale $\tau_I$ on Tables~\ref{tab:paths} and~\ref{tab:cycles}. Here we use singleton E-I oscillations to refer to E-I oscillations between a single excitatory node and the inhibitory node. Sychronized E-I oscillations refer to E-I oscillations where the excitory nodes have the same firing phase.
Thus, both singleton E-I oscillations and sequential chaotic oscillations (SCOs) are unsynchronized E-I oscillations. For the E-I TLNs on paths, the above E-I osciilations only appear for slow $\tau_I$ while fast $\tau_I$ always stablize the fixed point.

\begin{table}[htbp]
\footnotesize
\renewcommand{\arraystretch}{1.3}
\caption{E-I TLNs on paths}\label{tab:paths}
\begin{center}
  \begin{tabular}{c|c|c|c} \hline
    $G$ is an $n$-path &  strong inhibition & moderate inhibition  & weak inhibition \\ \hline
   $\FPe(G,a,c)$ & $\{ \sigma \subseteq [n] \mid \sigma \ne \emptyset \}$ & $\{\, \{n\} \,\}$ 
         & $\{\, [n] \,\}$ \\ \hline
    slow $\tau_I$ & singleton E-I oscillations or SCOs & singleton E-I oscillations & synchronized E-I oscillations\\
    fast $\tau_I$  & stable fixed points on $\{i, I\}$, $i \in [n]$ & stable fixed point on $\{n, I\}$ & stable fixed point on $[n] \cup \{I\}$ \\ \hline
  \end{tabular}
\end{center}
\end{table}

For the full-support fixed point of an E-I TLN on an $n$-cycle, the stability of the mean mode depends strongly on $\tau_I$, whereas that of the $z$-mode does not. When $\tau_I$ is sufficiently fast, the mean mode is always stable, and the dynamics are determined by the stability of the $z$-mode: the system exhibits either a steady state or CTLN-like oscillations. By contrast, when $\tau_I$ is sufficiently slow, the mean mode is unstable for most parameter values, although it remains stable in a small region of parameter space. However, the region for CTLN-like oscillations, where the mean mode is stable and the $z$-mode is unstable, disappears; see Figure~\ref{fig:effect_tau_I}C. 
As a result, CTLN-like oscillations no longer occur. As a result, CTLN-like oscillations no longer occur. Instead, if the mean mode is unstable, the system exhibits synchronized E-I oscillations or a flower-like attractor; if it is stable, the dynamics converge to the fixed point on $[n]\cup\{I\}$.

\begin{table}[htbp]
\footnotesize
\renewcommand{\arraystretch}{1.3}
\caption{E-I TLNs on cycles}\label{tab:cycles}
\begin{center}
  \begin{tabular}{c|c|c} \hline
    $G$ is an $n$-cycle &  strong inhibition & moderate and weak inhibition \\ \hline
   $\FPe(G,a,c)$ & $\{ \sigma \subseteq [n] \mid \sigma \ne \emptyset \}$ & $\{\, [n] \,\}$ for $c>(a-1)/(n-1)$
          \\ \hline
    slow $\tau_I$ & singleton E-I oscillations or SCOs & E-I oscillations or stable fixed point  on $[n] \cup \{I\}$ \\
    fast $\tau_I$  & stable fixed points on $\{i, I\}$, $i \in [n]$ & CTLN-like oscillation or stable fixed point  on $[n] \cup \{I\}$ \\ \hline
  \end{tabular}
\end{center}
\end{table}

\begin{figure}
    \centering
    \includegraphics[width=1\linewidth]{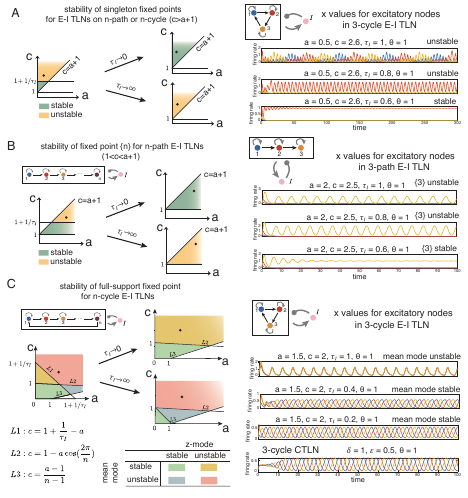}
    \caption{
    Effects of the inhibitory timescale $\tau_I$ ($\tau_I \to 0$ or $\tau_I \to \infty$) on the long-term behavior of E-I TLNs on paths and cycles.
    (A) Influence of $\tau_I$  on the stability of singleton fixed points in the strong inhibition regime for E-I TLNs on $n$-paths and $n$-cycles. A 3-cycle E-I TLN is shown as an example.
    (B) Same as (A), but in the moderate inhibition regime and only for E-I TLNs on $n$-paths. Examples are shown for a 3-path E-I TLN.
    (C) Influence of $\tau_I$ on the stability of the $z$-mode and mean mode fixed points, obtained from the mode decomposition of the full-support fixed point, for E-I TLNs on $n$-cycles. A 3-cycle E-I TLN and a CTLN with matched parameters are shown as examples on the right.
    In each panel, the parameters $a$ and $c$ are fixed while $\tau_I$ is varied.
    }
    \label{fig:effect_tau_I}
\end{figure}

\section{Discussion}
\label{sec:discussion}
In this paper, we analyze the fixed point structure of nondegenerate E--I TLNs on paths and cycles over the entire positive $(a,c)$-parameter plane. 
In particular, we develop five new graph rules for the strong- and weak-inhibition regimes. 
Among these, Lemma~\ref{lem:weakdomination} provides an exclusion rule in the weak-inhibition regime, while Lemmas~\ref{lem:uniform_indegree_on} and~\ref{lem:uniform_indegree_off} give on- and off-neuron conditions for subgraphs with uniform in-degree. 
Lemma~\ref{lem:paths} gives an on-neuron condition for disjoint unions of directed paths, and Lemma~\ref{lem:off condition} provides a sufficient off-neuron condition for arbitrary subgraphs.
Combining these results with domination theory (Lemma~\ref{lem:domination}) in the moderate-inhibition regime, we characterize the complete fixed point structure of nondegenerate E--I TLNs on $n$-paths and $n$-cycles; see Theorems~\ref{thm:path_supports} and~\ref{thm:cycle_supports}.

Based on the fixed points obtained above, we observe some interesting attractors in numerical simulations (MATLAB \texttt{ode45}). Notably, associated with the full-support fixed point of E-I TLNs on cycles, we find several types of attractors, including synchronized E-I oscillations, CTLN-like oscillations, and flower-like attractors. To distinguish the parameter regions in which these behaviors occur, we introduce the $z$-mode, together with the mean mode, as a new analytical tool. While the mean mode captures total excitatory and inhibitory activity, the $z$-mode captures differences in activities among excitatory nodes. The stability of these two modes allows us to classify the observed attractors.

Besides, around the singleton fixed points in the strong inhibition regime, we observed several chaotic attractors, including singleton E-I oscillations and sequential chaotic oscillations (SCOs). The geometry of these attractors resembles that of classical chaotic systems.
 For example, the phase portrait of a singleton E-I oscillation in $y$-space exhibits a M\"obius-like geometry reminiscent of the R\"ossler attractor~\cite{rossler1976equation}. Likewise, SCOs in $n$-cycle E--I TLNs display a multi-wing structure
reminiscent of the Lorenz attractor~\cite{lorenz2017deterministic,tucker2002rigorous}.
While the Lorenz attractor is a two-winged ``butterfly'', the $3$-cycle and $8$-cycle E-I TLNs exhibit three- and eight-winged attractors, respectively.
In these networks, trajectories switch among the wings in a fixed order determined by the underlying graph, but with unpredictable dwell times.

More broadly, the attractors discussed here, including singleton E-I oscillations, SCOs, and flower-like attractors, can all be interpreted as unsynchronized E-I oscillations. This notion differs from the strong, sustained, and regular E-I oscillations~\cite{wang1996gamma,wang2010neurophysiological}, and from the desynchronized E-I oscillations in spiking networks~\cite{borgers2003synchronization,brunel2008sparsely}. In our setting, E-I oscillations need not be regular or persistent; they may be transient, irregular, and even chaotic, as in the case of SCOs.
This perspective is consistent with recent experimental work by Battaglia and colleagues, who reported weak, weird, and transient oscillations in mouse hippocampal LFPs~\cite{douchamps2024gamma} and human EEG~\cite{bahuguna2025interdependence}. These studies suggest that the variability and ``messiness'' of such oscillations are not simply noise-induced, but may instead emerge from self-organized dynamics and mediate structured information processing~\cite{palmigiano2017flexible}. In this sense, the irregular oscillations observed experimentally may be viewed as functionally analogous to the SCOs found in our model: chaotic dynamics generates ``messiness'', while attractor ruins provide metastable structure, thereby supporting graph-determined sequential information flow.

As a form of chaotic itinerancy, SCOs provide a natural mechanism of itinerant dynamics. 
Such dynamics are often studied in multistable systems driven by noise~\cite{miller2016itinerancy} or by specific inputs~\cite{alvarez2026attractor}. 
By contrast, SCOs arise under constant input and follow a fixed, graph-determined transition order, while still exhibiting irregular dwell times for  metastable states. Moreover, although chaotic itinerancy has often been viewed as a high-dimensional phenomenon driven by high-dimensional chaos, our results show that this behavior does not require high dimensionality: it can already appear in dimension three, as in the E-I TLN on a $2$-path and it is not driven by chaos (see E-I TLNs on $n$-path).
Instead, SCOs in E-I TLNs on cycles suggest that the itinerant modules (attractor ruins) may serve as building blocks for complex chaos.

Finally, because the nonlinear component of TLN models is simply a threshold nonlinearity, implemented by the ReLU function, these models provide a minimal framework for studying how connectivity alone can shape neural dynamics. For
example, TLNs have previously been used to analyze how network structure constrains the emergence of oscillations in basal-ganglia circuits~\cite{zang2024structural}. In the same spirit, SCOs in E-I TLNs offer a possible dynamical mechanism for interpreting structured manifolds observed in neural population activity~\cite{perich2025neural}. For instance,
rotational neural dynamics in stop-signal experiments~\cite{komi2025neural},
which are interpreted as limit-cycle-like dynamics, could alternatively be viewed as motion through a sequence of metastable states or attractor ruins.
Such a mechanism naturally allows trajectories to pause near metastable states and to incorporate additional components into the sequence.

\section{Proofs}
\label{sec:proof}
We first present the mathematical details of the tools adapted from CTLNs, such as chambers and fixed-point supports, and explain the differences between E-I TLNs and CTLNs. Using these tools, we then prove Theorems~\ref{thm:path_supports}, \ref{thm:cycle_supports}, and \ref{thm:cycle_2_modes}.
\subsection{Link to CTLNs} \label{sec:link_CTLN}

As an extension of combinatorial threshold-linear networks (CTLNs), E-I TLNs share some structural properties with CTLNs, including their fixed-point structure. 
In this section, we review CTLNs, explain their relationship to E-I TLNs, and introduce several notions for E-I TLNs that are inherited from the CTLN framework. We also introduce a notion of nondegeneracy for E-I TLNs and use it to show that each support admits at most one fixed point in a nondegenerate E-I TLN.

 \begin{figure}[t]
    \centering
    \includegraphics[width=0.9\linewidth]{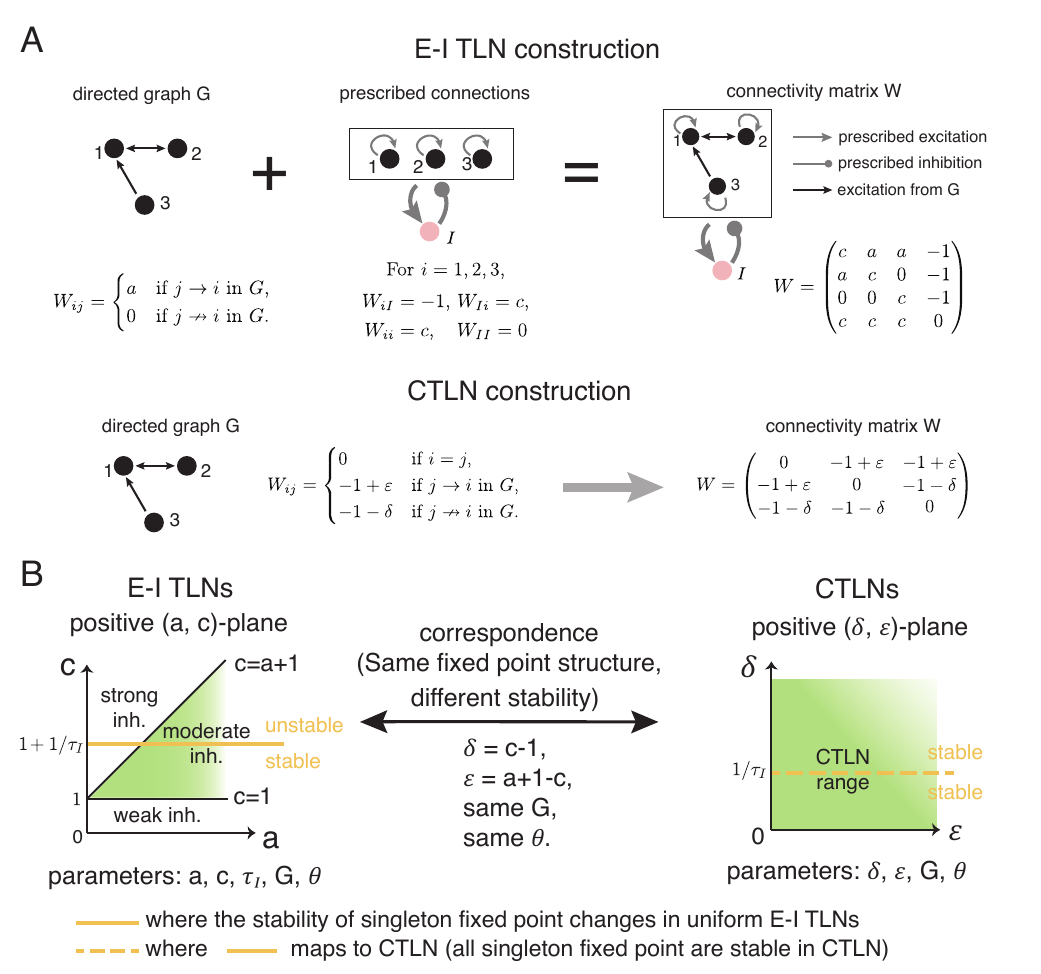}
    \caption{
    Correspondence between E-I TLNs and CTLNs.
    (A) Construction of the connectivity matrices $W$ for E-I TLNs and CTLNs from the same directed graph $G$.
    (B) Matching E-I TLNs to CTLNs in the Moderate Inhibition Regime. Any E-I TLN in the moderate inhibition regime ($1<c<a+1$) can be matched to a CTLN with the same fixed-point structure. Moreover, under sufficiently fast inhibition ($\tau_I \ll \tau_E$), the E-I TLN exhibits long-term dynamics that closely resemble those of the matched CTLN.
    }
    \label{fig:translation from CTLN}
\end{figure}

\paragraph{CTLN setting.}
CTLNs were first introduced in \cite{morrison2024diversity} and they are generated from a directed graph $G$, where each vertex corresponds to an excitatory node $i$. Their dynamics are governed by the threshold-linear system
\begin{align*}
 \frac{dx_i}{dt} &= -x_i + \left[ \sum_{j=1}^{n} W_{ij}x_j + b_i \right]_+, \qquad i = 1, \dots, n.
\end{align*}
As shown in Figure~\ref{fig:translation from CTLN}A, the connectivity matrix $W$ in CTLNs is determined by the graph $G$, where $j \to i$ indicates that there is an edge from $j$ to $i$, and $j \nrightarrow i$ indicates that there is no such edge:
\[
W_{ij} =
\begin{cases}
0 & \text{if } i=j,\\
-1+\varepsilon & \text{if } j\to i \text{ in } G,\\
-1-\delta & \text{if } j\nrightarrow i \text{ in } G.
\end{cases}
\]
All entries of $W$ are nonpositive, reflecting the assumption that excitatory nodes operate under a global “blanket of inhibition”.
Moreover, the external inputs in CTLNs are set to $b_i = \theta > 0$, for $i \in [n]$. Much of the foundational work on CTLNs is summarized in~\cite{curto2023graph,curto2019fixed}. Next, we introduce several mathematical tools originally developed for CTLNs, which can also be applied to E-I TLNs.

\paragraph{Chambers and fixed points for E-I TLNs.}
Similar to CTLNs~\cite{morrison2024diversity}, the dynamics of  E–I TLNs are bounded below, i.e., $x_i \ge 0$.  In fact, from the definition of E-I TLN equations~\eqref{eq:EI_TLN},
we have
\begin{align*}
    \frac{dx_i}{dt} \ge -x_i, \quad i = 1,2, \dots, n, I.
\end{align*}
It follows that $dx_i / dt \ge 0$  at $x_i = 0$ and $dx_i / dt > 0$ for $x_i < 0$.
Therefore, any trajectory $x(t)$ either approaches or enters the region $x_i \ge 0$ as $t \to \infty$.
Then their asymptotic dynamics are confined to the nonnegative orthant $x_i \geq 0$ for all excitatory neurons ($i = 1,2, \dots, n$) and for the inhibitory neuron ($i=I$). 

Hence, the $x$-space of interest is the nonnegative region $\mathbb{R}^{n+1}_{\ge 0}$, where the first $n$ coordinates correspond to the $n$ excitatory neurons and the last coordinate corresponds to the inhibitory neuron.  On this domain, the system is piecewise linear, with different linear subsystems governing the dynamics in different regions. To analyze how the system switches between these linear subsystems, it is convenient to introduce the variables $y$, which are the terms inside the ReLU activation. For $W_{II}=0$, $b_I=0$, and $b_i = \theta$ in E-I TLNs, they are given by
\begin{subequations}
\begin{align}
\tau_E \frac{dx_i}{dt} &= -x_i + [y_i]_+, \qquad y_i = \sum_{j=1}^{n} W_{ij}x_j + W_{iI}x_I + \theta \quad i = 1,\dots,n,\\
\tau_I \frac{dx_I}{dt} &= -x_I + [y_I]_+, \qquad y_I = \sum_{j=1}^{n} W_{Ij}x_j. 
\end{align}
\end{subequations}
A key observation is that, since  $x_j \ge 0$ and $W_{Ij} \ge 0$ in E-I TLNs, we have $y_I(x) =  \sum_{j=1}^{n} W_{Ij}x_j\ge 0$ for all $x \in \mathbb{R}^{n+1}_{\ge 0}$.  Hence, $[y_I]_+ = y_I$ for all such $x$.

For each excitatory neuron in E-I TLNs, the equation $y_i(x)=0$ defines a hyperplane in $x$-space $\mathbb{R}^{n+1}_{\ge 0}$. Since $y_I(x) \ge 0$, these hyperplanes divide $\mathbb{R}^{n+1}_{\ge 0}$ into at most $2^{n}$ regions of the form:
\begin{equation}
R_{\sigma} = \{ x \in \mathbb{R}^{n+1} \mid y_i(x) > 0 \;\; \forall i \in \sigma, \;\; y_k(x) \leq 0 \;\; \forall k \notin \sigma, \; y_I(x) \ge 0 \},
\end{equation}
where $\sigma\subseteq [n]$ and $[n] \od \{1,2,\dots,n\}$. Then, we refer to $R_{\sigma}$ as a \emph{chamber} of $\sigma$ in $x$-space.  Similarly, we define corresponding chambers in $y$-space, where $y$ is a linear transformation of $x$ and lies in $\mathbb{R}^{n+1}$. The chamber associated to $\sigma$ in $y$-space is given by:
\begin{equation}
R^y_{\sigma} = \{ y \in \mathbb{R}^{n+1} \mid y_i > 0 \;\; \forall i \in \sigma, \;\; y_k \leq 0 \;\; \forall k \notin \sigma, \; y_I \ge 0 \}.
\end{equation}
By definition, $R_{\sigma}$ and $R^y_{\sigma}$ correspond to the same space under the linear transformation between $x$ and $y$. 
Moreover, for any chamber $R_{\sigma}$ or $R^y_{\sigma}$, the sign pattern of $y_i$ is determined—so each threshold function $[y_i]_+$ becomes either $y_i$ or $0$.  As a result, the dynamics within the chamber $R_\sigma$ reduce to a linear system, which we denote by $L_\sigma$:
\begin{equation}
    L_\sigma = 
    \left\{ \left. \frac{dx_i}{dt} = - x_i + y_i \;\right|\; i \in \sigma \right\}
    \bigcup
    \left\{ \left. \frac{dx_k}{dt} = - x_k \;\right|\; k \notin \sigma, k \neq I \right\}
    \bigcup
    \left\{ \frac{dx_I}{dt} = - x_I + y_I  \right\}.
\end{equation}

 Next, we say a fixed point of an E-I TLN in $x$ space is a point $x^* \in \mathbb{R}^{n+1}$ satisfying $\left.\tfrac{dx_i}{dt}\right|_{x=x^*} = 0$ for each $i = 1,\dots,n,I$. 
In particular, this requires
\begin{equation*}
    x_i^* = [y_i(x^*)]_+ = [y_i^*]_+ \quad \text{for all } i = 1,\dots,n,I,
\end{equation*}
where $y_i^*$ denotes the value of $y_i$ evaluated at the fixed point $x_i^*$. Hence, $y^*$ is the corresponding fixed point in $y$-space. Now for each fixed point $x^*$, we define its \emph{support} as the set of active nodes and \emph{e-support} as the set of active excitory nodes:
\begin{equation*}
\supp (x^*)  \od  \{\, i \in [n]\cup\{I\} \mid x_i^* > 0 \,\}, \qquad 
\esupp (x^*)  \od  \{\, i \in [n] \mid x_i^* > 0 \,\},
\end{equation*}
where $[n] = \{1,2,\dots,n\}$. 
It is then straightforward to verify that for any fixed point $x^*$ of E-I TLNs,  $\supp(x^*) = \esupp(x^*) \cup \{I\}$, that is, the inhibitory neuron is active at every fixed point since $b_i = \theta > 0$ and $b_I = 0$. 
For this reason, it is sufficient to focus on $\esupp(x^*)$ for E-I TLNs.  Moreover, the set $\{I\}$ cannot be a support since the external input satisfies $b_i=\theta > 0$ and $b_I = 0$.

Finally, we denote by $\EIFP$ the collection of all possible supports of fixed points in an $\EITLN$:
\begin{equation}\label{eq:FP}
    \EIFP \;\od \;\bigl\{\, \sigma\cup\{I\} \mid
    \sigma\cup\{I\} = \supp(x^*) \text{ for some fixed point } x^* 
    \text{ of the }  \EITLN \bigr\};
\end{equation}
and $\EIFPe$, the collection of all possible e-supports of fixed points
\begin{equation*}
    \EIFPe \;\od \;
    \bigl\{\, \sigma \mid 
    \sigma = \esupp(x^*) \text{ for some fixed point } x^* 
    \text{ of the }  \EITLN \bigr\}.
\end{equation*}
The tools introduced above turn out to be very useful for analyzing the full fixed point structure.  
Indeed, consider a fixed point $x^*$ with $\esupp(x^*) = \sigma$.  
Then $x_i^* = y_i(x^*) > 0$ for all $i \in \sigma \cup \{I\}$, and $x_k^* = 0$ with $y_k(x^*) \le 0$ for all $k \in [n]\setminus\sigma$.  
Together with the fact that $y_I(x^*) > 0$, it follows that $x^*$ must lie in the chamber $R^y_{\sigma}$ and it is the fixed point of the corresponding linear system $L_\sigma$. 

\paragraph{Translation from CTLNs.}
The connection between CTLNs and E-I TLNs comes from the observation in~\cite{curto2025graphical} that, under the separation-of-timescales assumption $\tau_I \ll \tau_E = 1$, an E-I TLN effectively reduces to a CTLN. Specifically, an E-I TLN with parameters $(a,c)$ corresponds to a CTLN with parameters $(\delta,\varepsilon)$ via
\begin{equation*}
\delta = c-1, \qquad \varepsilon = a-c+1,
\end{equation*}
and the two models share exactly the same collection of fixed-point e-supports, as illustrated in Figure~\ref{fig:translation from CTLN}B.

Moreover, for any E-I TLN, the collection of fixed-point supports and e-supports is independent of $\tau_I$ and $\theta$. In particular,
\begin{equation}
    \EIFPe = \FPe(W) = \FPe(G,a,c),
\end{equation}
where the last equality follows from the fact that the connectivity matrix $W$ is determined by the underlying graph $G$ and the parameters $a,c$.
The proof is straightforward.
 Let $x^*$ be a fixed point of an $\EITLN(W,\theta,\tau_I)$.
 Then by definition, $x^*$ satisfies $x_i^* = [y_i^*]_+$. 
 Now consider any other ${(W,\theta',\tau'_I)}$ and define $x' = \theta' x^* / \theta$, it is obvious to verify that $x'_i = [y'_i]_+$, so $x'$ is the fixed point of the new E-I TLN.
 Moreover, since $\theta'/\theta > 0$, the signs of $x'_i$ and $x^*_i$ coincide, implying $\esupp (x') = \esupp (x^*)$.

 Combining these observations, any structural result about supports established for CTLNs in the region $\delta>0$ and $\varepsilon>0$ carries over directly to E-I TLNs in the parameter region $1<c<a+1$ with $a>0$. However, although the existence of fixed-point e-supports is independent of $\tau_I$, their stability is not in general, since the eigenvalues of the Jacobian at a fixed point depend on $\tau_I$.

For example, any singleton fixed point in a CTLN is always stable. Indeed, the reduced system on a singleton $\{k\}$ is given by
\[
\frac{dx_k}{dt} = -x_k + [\theta]_+.
\]
Since this one-dimensional system has the unique eigenvalue $-1$, the singleton fixed point is always stable. In contrast, the singleton fixed point in an E-I TLN is stable for $c < 1 + \frac{1}{\tau_I}$ and unstable for $c > 1 + \frac{1}{\tau_I}$ (see singleton E-I TLN). Thus, when importing results from CTLNs into the E-I setting, all stability conditions must be reconsidered. We next translate several key CTLN definitions and results that serve as foundational tools for our analysis of E-I TLNs.

In almost all previous CTLN studies, TLNs are required to be nondegenerate (which is true for most parameters). It is set to rule out line attractors, where a given support could give rise to infinite fixed points, and prevents the pathological case where fixed points from two distinct subsystems coincide. For nondegenerate TLNs, Morrison et al.\ have proved that the total number of fixed points is always odd~\cite{morrison2024diversity}. Due to this reason, our main results will also rely on nondegeneracy of E-I TLNs. The definition mirrors the notion of nondegenerate TLNs introduced in~\cite{morrison2024diversity}.
\begin{definition}[Nondegenerate E-I TLN]\label{def:nondegenerate}
Let $(W,b,\tau_I)$ be an E-I TLN on $n$ excitatory nodes and one inhibitory node. We say that it is nondegenerate if the following conditions hold:
\begin{itemize}
    \item $\det(I - W_{\sigma \cup \{I\}}) \neq 0$ for each $\sigma \subseteq [n]$;
    \item for each $\sigma \subseteq [n]$ such that $b_i > 0$ for all $i\in\sigma$, 
    the corresponding Cramer's determinant is nonzero:
    $
    \det\big((I-W_{\sigma \cup \{I\}})_i; b_{\sigma \cup \{I\}}\big) \neq 0;
    $
    \item $b_i > 0$ for at least one $i \in [n]$.
\end{itemize}
\end{definition}
Here we use the notation $A_\sigma$ and $b_\sigma$ to denote a submatrix of $A$ and a subvector of $b$ obtained by restricting to the indices in $\sigma$. We also use the notation $(A_i; b)$ to denote the matrix obtained from $A$ by replacing its $i$-th column with the vector $b$, as in Cramer's rule. 
The first condition, $\det(I-W_{\sigma\cup\{I\}})\neq 0$, ensures that all the linear subsystems of the E-I TLNs with $b_I \ge 0$ are nondegenerate. Hence there is only one fixed point for each linear subsystem. 
The second condition guarantees that the fixed points of two adjacent linear subsystems do not coincide on the common boundary of their respective chambers in state space.  
Finally, the third condition ensures that the origin is not a fixed point of the TLN.

Thus, for any non-degenerate TLN and a given subset $\sigma \subseteq [n]$, there is at most one fixed point with support $\sigma$. The proof for CTLNs has been given by Morrison et al.~\cite{morrison2024diversity} under the assumptions of competitive TLN ($W_{ij} \leq 0, W_{ii} = 0$, for $i,j \in  [n]$) and in the CTLN parameter range ($\delta>0$ and $\varepsilon>0$).  Here, we extend this uniqueness result and stability condition to E-I TLNs for the entire positive $(a,c)$ parameter plane. In addition, it is straightforward to show that, for any nondegenerate E-I TLN, the support of a fixed point must include the inhibitory node. Consequently, Lemma~\ref{lem:unique fp} concerns only supports of the form $\sigma\cup\{I\}$, where $\sigma\subseteq[n]$.
\begin{restatable}{lemma}{Lemmanondegenerate}\label{lem:unique fp}
    Let $(W,b,\tau_I)$ be a nondegenerate E-I TLN on $n$ excitatory nodes and one inhibitory node. For each $\sigma \subseteq [n]$, the E-I TLN has at most one fixed point with $\supp (x^*) = \sigma \cup \{I\}$. If such a fixed point exists, it is stable if and only if $T_{\sigma \cup \{I\}}^{-1}(-I + W_{\sigma \cup \{I\}})$ is a stable matrix (i.e, all eigenvalues have negative real part), where $T=diag(1, \dots, 1, \tau_I)$ is the $(n+1)\times(n+1)$ diagonal timescale matrix.
\end{restatable}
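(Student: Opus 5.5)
The proof has two parts: uniqueness, which is linear algebra on the active subsystem, and stability, which comes from local linearity near a fixed point.

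\textbf{Uniqueness.} Let $x^*$ be a fixed point with $\supp(x^*)=\sigma\cup\{I\}$. For every $i\in\sigma\cup\{I\}$ we have $x_i^*>0$. The fixed point equation $x_i^*=[y_i^*]_+$ then forces $y_i^*=x_i^*>0$. For $k\in[n]\setminus\sigma$ we have $x_k^*=0$, so these coordinates drop out of $y_i$. Restricting to the active indices gives the linear system
$$(I-W_{\sigma\cup\{I\}})\,x^*_{\sigma\cup\{I\}}=b_{\sigma\cup\{I\}}.$$
The first nondegeneracy condition says $\det(I-W_{\sigma\cup\{I\}})\neq0$, so this system has exactly one solution. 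Together with $x^*_k=0$ off the support, $x^*$ is completely determined by $\sigma$. This already proves uniqueness.

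\textbf{Stability.} I want to show that the ReLU dynamics are linear in a full neighborhood of $x^*$, with no switching. On the support, $y_i^*>0$ for $i\in\sigma\cup\{I\}$. Off the support, $y_k^*\le 0$ for $k\notin\sigma$. The task is to upgrade this to $y_k^*<0$ strictly, using the second (Cramer) nondegeneracy condition.

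Suppose instead $y_k^*=0$ for some $k\notin\sigma$. Then $x^*$ also solves the linear system on the enlarged support $\tau=\sigma\cup\{k\}\cup\{I\}$, with $k$-th coordinate equal to $0$. By Cramer's rule that coordinate equals
$$\frac{\det\big((I-W_\tau)_k;b_\tau\big)}{\det(I-W_\tau)}.$$
This requires $\det(I-W_\tau)\ne0$, which the first condition supplies. The Cramer condition applies because $b_i=\theta>0$ on excitatory indices. The apparent wrinkle is that $b_I=0$. However, the condition as stated only requires $b_i>0$ for $i\in\sigma$, and the inhibitory index is not in $\sigma$, so it still applies. It says the numerator is nonzero, contradicting the coordinate being $0$. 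Hence $y_k^*<0$ strictly.

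With all inequalities strict, continuity of $y$ in $x$ gives a neighborhood of $x^*$ lying entirely inside the chamber $R_\sigma$. On that neighborhood the flow is exactly the linear system $L_\sigma$, which in the paper's notation reads $T\dot x=(-I+\widetilde W)x+\widetilde b$. Here $\widetilde W$ keeps the rows of $W$ indexed by $\sigma\cup\{I\}$ and zeros the rows indexed by $k\notin\sigma$.

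Next, order the coordinates so that active indices come first. The Jacobian $T^{-1}(-I+\widetilde W)$ is then block lower-triangular. The inactive rows are $\dot x_k=-x_k$, so they contribute a diagonal block $-I$. The active block is $T^{-1}_{\sigma\cup\{I\}}(-I+W_{\sigma\cup\{I\}})$.

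The spectrum is the union of $\{-1\}$, repeated, and the spectrum of this active block. Since $T$ is positive diagonal, the active block is nonsingular by the first condition, so no eigenvalue is zero. Therefore $x^*$ is asymptotically stable if and only if $T^{-1}_{\sigma\cup\{I\}}(-I+W_{\sigma\cup\{I\}})$ is a stable matrix. Instability follows in the same way, since the flow is exactly linear near $x^*$ and no Hartman–Grobman argument is needed.

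\textbf{Main obstacle.} The delicate step is the strictness argument $y_k^*<0$. It needs careful matching to the exact index conventions in the Cramer condition, and in particular to the fact that the inhibitory input is $b_I=0$.
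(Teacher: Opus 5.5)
Your proof is correct and follows the paper's argument. Uniqueness comes from invertibility of $I-W_{\sigma\cup\{I\}}$; the paper phrases this through the full chamber system $L_\sigma$, whose matrix has the same determinant up to sign. Stability comes from the block-triangular Jacobian, whose spectrum is that of the active block together with $|[n]\setminus\sigma|$ copies of $-1$.

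Your extra step is worth keeping. You show $y_k^*<0$ strictly for $k\notin\sigma$ via the Cramer nondegeneracy condition, so that $L_\sigma$ governs a full neighborhood of $x^*$; the paper leaves this implicit. One small slip: with active coordinates ordered first, the Jacobian is block upper-triangular, not lower-triangular. This does not affect the spectrum.
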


\begin{proof}
Suppose $\sigma \subseteq [n]$. Without loss of generality, we can reorder the nodes so that $\sigma = \{1,\dots,|\sigma|\}$ are the active nodes and  $\bar\sigma = [n]\setminus\sigma = \{|\sigma|+1,\dots,n\}$ are the silent nodes. Moreover the inhibitory neuron is the $(n+1)$-st neuron, which is active for any support since $b_I \ge 0$ and $b_i=\theta>0$. 
Let $x^*$ be a fixed point with $\supp(x^*) = \sigma \cup \{I\}$. Thus, $x^* \in R_{\sigma}$, where the E-I TLN reduces to a linear system $L_\sigma$:
{\renewcommand{\arraystretch}{1.5}
\begin{equation*}
    T \frac{dx}{dt} = A x + b
    \quad\text{for}\quad 
    A = \left[
    \begin{array}{cc|c}
    -I_{|\sigma|} + W_{\sigma\sigma} & W_{\sigma \bar\sigma}            & -\mathbf{1}_{|\sigma|} \\
    0                                & -I_{|\bar\sigma|}                & 0 \\
    \hline
    c \mathbf{1}_{|\sigma|}^\top     & c \mathbf{1}_{|\bar\sigma|}^\top & -1
    \end{array}
    \right], \quad
    b = \left[
    \begin{array}{c}
    b_{\sigma} \\ 0 \\ \hline b_I
    \end{array}
    \right],\quad
    T = \left[
    \begin{array}{cc|c}
    I_{|\sigma|}          & 0                & 0\\
    0                     & I_{|\bar\sigma|} & 0\\
    \hline 0              & 0                & \tau_I
    \end{array}
    \right].
\end{equation*}
}
Here $W_{\sigma\sigma}$ is the submatrix of $W$ restricted to $\sigma$ and $W_{\sigma\bar\sigma}$ is the submatrix of $W$ with rows indexed by $\sigma$ and columns indexed by $\bar\sigma$.  $I_{|\sigma|}$ and $I_{|\bar\sigma|}$ are identity matrices of sizes
$|\sigma|$ and $|\bar\sigma|$. $\mathbf{1}_k$ denotes the $k$-dimensional all-ones column vector.  For an E-I TLN, $b_\sigma=\theta\,\mathbf{1}_{|\sigma|}$, $b_I=0$, and $\tau_E=1$ for all excitatory nodes (hence the first $n$ diagonal entries of $T$ are equal to 1).

For the fixed point $x^*$ of the E-I TLN with e-support $\sigma$, it must also be a fixed point of $L_\sigma$. Since this E-I TLN is nondegenerate (then $\det A \neq 0$ from Definition~\ref{def:nondegenerate}), $L_\sigma$ has a unique fixed point $x'$, which is given by 
$$
x' = (-A)^{-1} \begin{bmatrix} b_{\sigma} \\ 0 \\ b_I \end{bmatrix},
$$
and $x'$ becomes a fixed point of the original E-I TLN if and only if $x' \in R_\sigma$ and $y_i(x') \leq 0$ for $i \in \bar\sigma$. Therefore, there is at most one fixed point with e-support $\sigma$ since there is only one fixed point candidate.

Moveover, if such a fixed point $x^*$ exists, its stability is determined by the Jacobian of $L_\sigma$ at $x^*$.
Since $L_\sigma$ is linear, the Jacobian is constant and equals $T^{-1}A$, i.e.,
{\renewcommand{\arraystretch}{1.5}
\begin{equation*}
T^{-1}A = \left[
\begin{array}{cc|c}
-I_{|\sigma|} + W_{\sigma\sigma} & W_{\sigma \bar\sigma} & -\mathbf{1}_{|\sigma|}\\
0 & -I_{|\bar\sigma|} & 0\\
\hline
\frac{c}{\tau_I} \mathbf{1}_{|\sigma|}^\top 
& \frac{c}{\tau_I} \mathbf{1}_{|\bar\sigma|}^\top 
& - \frac{1}{\tau_I}
\end{array}
\right]
\end{equation*}
}
Now permute coordinates to the order $(1,\dots, |\sigma|,I, |\sigma|+1, \dots, n )$ so that the first $|\sigma|+1$ neurons are active and the remaining neurons are silent.  Under this permutation, the Jacobian is similar to $T^{-1}A$ , and it has block upper triangular form:
{\renewcommand{\arraystretch}{1.5}
\begin{equation*}
    T^{-1}A \sim 
    \left[
    \begin{array}{ccc}
    - I_{|\sigma|} + W_{\sigma\sigma}          & -\mathbf{1}_{|\sigma|}  & W_{\sigma \bar\sigma} \\
    \frac{c}{\tau_I}\mathbf{1}_{|\sigma|}^\top & -\frac{1}{\tau_I}       & \frac{c}{\tau_I}\mathbf{1}_{|\bar\sigma|}^\top \\
     0 & 0 & -I_{|\bar\sigma|}
    \end{array}
    \right]
    =
    \left[
    \begin{array}{cc}
    B & * \\
    0 & -I_{|\bar\sigma|}
    \end{array}
    \right].
    \end{equation*}
}
where $B = T_{\sigma\cup\{I\}}^{-1}(-I + W_{\sigma\cup\{I\}})$. Since similarity preserves eigenvalues, we may compute the spectrum from this block form.  The characteristic polynomial factors as
$$
\det(\lambda I - T^{-1}A)
=
\det(\lambda I - B)\,\det(\lambda I + I_{|\bar\sigma|}),
$$
and therefore
$$
\spec(T^{-1}A)
=
\spec(B)\cup\{ \underbrace{-1, \dots, -1}_{|\bar\sigma|\text{ times}} \}.
$$
In particular, the eigenvalues associated with the silent coordinates are all equal to $-1$, and the remaining eigenvalues are
exactly those of $B$.  Therefore, $x^*$ is stable if and only if $B$ is a stable matrix.
\end{proof}

Lemma~\ref{lem:unique fp} establishes that, for a nondegenerate E-I TLN, each support corresponds to at most one fixed point. Building on the nondegenerate CTLN framework, a series of graph rules have been developed, including domination theory and uniform in-degree results~\cite{curto2023graph}.
In this paper, we also restrict our attention to nondegenerate E–I TLNs. Within this setting, we characterize all possible supports for $n$-paths and $n$-cycles, with the results and proofs given in the next section.

\subsection{Proof for fixed point structure}
\label{sec:proof_support}
Theorems~\ref{thm:path_supports} and~\ref{thm:cycle_supports} characterize the fixed point e-supports of nondegenerate E-I TLNs on paths and cycles in the strong, moderate, and weak inhibition regimes. 
To show that, we rely on domination theory and direct fixed-point analysis. 

Recall that a point $x^*$ is a fixed point if $x^*_i = [y^*_i]_+$, $i \in [n] \cup \{I\}$. Then, a subset $\sigma \cup \{I\}$ is the support of $x^*$ if  $x^*_i>0$ for $i \in \sigma \cup \{I\}$ and $x^*_j=0$ for $j \in [n] \setminus \sigma$. Thus, the fixed point conditions naturally split into on- and off-neuron conditions:
\begin{enumerate}
    \item  \emph{On-neuron condition:} for each $i \in \sigma \cup \{I\}$, $x^*_i = y_i(x^*)>0$. This condition ensures that the neurons in $\sigma$ (and the inhibitory neuron $I$) are active and hence is referred to as the on-neuron condition. This condition has two parts. First, $x^*$ must be the solution of
    \begin{equation}\label{eq:on_neuron}
        (I-W_{\sigma \cup \{I\}})x_{\sigma \cup \{I\}} = b_{\sigma \cup \{I\}}.
    \end{equation}
    Since we only consider nondegenerate E-I TLNs, $I-W_{\sigma\cup\{I\}}$ is
    invertible. Hence, \Eq{eq:on_neuron} has a unique solution. Second, all
    coordinates of this solution must be positive, so that
    $x^*_i = y_i(x^*)>0$ for all $i\in\sigma\cup\{I\}$. The on-neuron condition is exactly the same as the fixed point condition for the full-support fixed point of the restricted network on $G|_\sigma$. Thus, satisfying the on-neuron condition is equivalent to the existence of a full-support fixed point in the restricted network on $G|_\sigma$, i.e., $\sigma \in \FPe(G|_\sigma,a,c)$.
    \item \emph{Off-neuron condition:} for each $i \notin \sigma$, $y_i(x^*) \leq 0$.
    This condition ensures that all neurons outside $\sigma$ remain silent. 
    If it holds, then the fixed point $x^*|_{\sigma\cup\{I\}}$ of the restricted E-I TLN on $G|_\sigma$ extends to the full network by keeping the same coordinates on $\sigma\cup\{I\}$ and setting all coordinates outside $\sigma$ to zero. 
    Thus,
    \[
    \sigma\in\FPe(G|_\sigma,a,c)
    \Rightarrow
    \sigma\in\FPe(G,a,c).
    \]
\end{enumerate}
The on- and off-neuron conditions together characterize when a subset $\sigma \subseteq [n]$ can serve as a fixed point e-support. In what follows, we introduce six lemmas for checking candidate supports in E-I TLNs on paths and cycles. We first restate the domination theory from~\cite{curto2025graphical} as Lemma~\ref{lem:domination}, which constrains the fixed point structure in the moderate inhibition regime. Beyond the moderate inhibition regime, we introduce weak domination and four additional lemmas. 
The former excludes certain subsets from being supports in the weak inhibition regime (Lemma~\ref{lem:weakdomination}) while the latter (Lemmas~\ref{lem:uniform_indegree_on},~\ref{lem:uniform_indegree_off},~\ref{lem:paths} and~\ref{lem:off condition}) are used to verify the on- and off-neuron conditions for candidate supports in paths and cycles, as summarized in Table~\ref{tab:lemmas}.

Throughout the proof, we frequently use two specific types of subgraphs: subgraphs with uniform in-degree and independent sets. 
We recall their definitions below.
\begin{definition}[Uniform in-degree]
Let $G$ be a directed graph on $n$ vertices and let $\sigma\subseteq[n]$. 
We say that the induced subgraph $G|_\sigma$ has \emph{uniform in-degree} $d$ if every node in $\sigma$ has $d$ incoming edges within $G|_\sigma$.
\end{definition}

\begin{definition}[Independent set]
Let $G$ be a directed graph on $n$ vertices and let $\sigma\subseteq[n]$. 
We say that $G|_\sigma$ is an \emph{independent set} if there are no directed edges within $G|_\sigma$. 
\end{definition}

\begin{table}[H]
\footnotesize
\renewcommand{\arraystretch}{1.3}
\caption{Lemmas for verifying on-neuron and off-neuron conditions}\label{tab:lemmas}
\begin{center}
  \begin{tabular}{c|c|c} \hline
      &  directed subgraph & result   \\ \hline
   Lemma~\ref{lem:uniform_indegree_on} &  subgraphs with uniform in-degree & necessary and sufficient on-neuron condition  \\ \hline
   Lemma~\ref{lem:uniform_indegree_off} & subgraphs with uniform in-degree & necessary and sufficient  off-neuron condition  \\ \hline
   Lemma~\ref{lem:paths} &  disjoint unions of one or more paths & sufficient on-neuron condition  \\ \hline
   Lemma~\ref{lem:off condition} &  arbitrary subgraphs & sufficient off-neuron condition \\ \hline
  \end{tabular}
\end{center}
\end{table}
It turns out that the six lemmas above are particularly useful for analyzing E-I TLNs on paths and cycles.
Graphical domination (Lemma~\ref{lem:domination}) and weak domination (Lemma~\ref{lem:weakdomination}) allow us to exclude certain subsets as supports in the moderate and weak inhibition regimes, respectively. 
Lemmas~\ref{lem:uniform_indegree_on} and~\ref{lem:uniform_indegree_off} apply to subgraphs with uniform in-degree, such as independent sets and cycles.
Moreover, since any proper induced subgraph of a path or a cycle is a disjoint union of one or more paths, Lemmas~\ref{lem:paths} and~\ref{lem:off condition} are used to show that all such subsets survive as supports in the strong inhibition regime.

\subsubsection{Lemmas for domination and weak domination}

Lemma~\ref{lem:domination} is from the theory of graphical domination developed in~\cite{curto2025graphical}, which provides strong combinatorial constraints on possible supports. This notion captures a specific domination relation between vertices in the underlying graph $G$. 
It was first introduced in~\cite{curto2019fixed} and is defined as follows.
\begin{definition}[Graphical domination]\label{def:domination}
    Let $j,k \in [n]$ be vertices of $G$. We say that $k$ \emph{graphically dominates} $j$ in $G$ if the following two conditions hold:
    \begin{enumerate}
    \item For each $i \in [n]\setminus\{j,k\}$, if $i \to j$ then $i \to k$.
    \item $j \to k$ and $k \not\to j$.
    \end{enumerate}
\end{definition}

If such a pair $(j,k)$ exists, then $j$ is called a dominated node (or dominated vertex) of $G$.
For E-I TLNs with arbitrary $\tau_I$ and in the moderate inhibition regime ($1< c<a+1$ with $a>0$), dominated nodes can be removed to obtain a smaller subnetwork that preserves the fixed point structure. 
This fact is fully established in~\cite{curto2025graphical}.
Below we restate Lemma 3.8 of~\cite{curto2025graphical} in the setting of E–I TLNs with parameters $a$ and $c$.
\begin{lemma}[Lemma~3.8, \cite{curto2025graphical}]\label{lem:domination}
Suppose $j$ is a dominated node in a directed graph $G$. Then for nondegenerate E-I TLNs in the moderate inhibition regime ($1 < c < a+1$ with $a>0$), the fixed point supports of the network constructed from $G$ satisfy
\begin{equation*}
    \FPe(G,a,c) = \FPe(G|_{[n]\setminus \{j\}},a,c).
\end{equation*}
\end{lemma}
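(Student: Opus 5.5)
The plan is to prove Lemma~\ref{lem:domination} directly, by first reducing the fixed-point equations of the E-I TLN to those of the matched CTLN and then running the classical domination argument. Since $W_{II}=0$ and $b_I=0$, every fixed point satisfies $x_I^*=c\sum_{j}x_j^*$. Substituting this into $y_i$ and using $W_{ii}=c$ gives
\[
y_i(x^*)=\theta+\sum_{j\neq i}\widetilde W_{ij}x_j^*,\qquad \widetilde W_{ij}=W_{ij}-c=\begin{cases}-1+\varepsilon & j\to i,\\ -1-\delta & j\nrightarrow i,\end{cases}
\]
with $\delta=c-1$ and $\varepsilon=a-c+1$, and with $\widetilde W_{ii}=0$. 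In the moderate regime both $\delta>0$ and $\varepsilon>0$. Hence the fixed points with e-support $\sigma$ of the E-I TLN on $G$ correspond bijectively to the CTLN fixed points with support $\sigma$. The same holds for $G|_{[n]\setminus\{j\}}$, since restriction commutes with this elimination. It therefore suffices to prove the statement for these effective equations, with the weights $\widetilde W$ and constants $\delta,\varepsilon>0$.

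\textbf{Step 1: a dominated node is never active.} Let $k$ graphically dominate $j$. For any point $x\ge 0$, compute
\[
y_k-y_j=\sum_{i\neq j,k}(\widetilde W_{ki}-\widetilde W_{ji})x_i+\widetilde W_{kj}x_j-\widetilde W_{jk}x_k .
\]
Condition 1 of Definition~\ref{def:domination} makes each term in the sum nonnegative. Condition 2 gives $\widetilde W_{kj}=-1+\varepsilon$ and $\widetilde W_{jk}=-1-\delta$. So
\[
y_k\ge y_j+(-1+\varepsilon)x_j+(1+\delta)x_k .
\]
Now suppose a fixed point has $j\in\sigma$, so that $y_j=x_j>0$. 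The inequality gives $y_k\ge \varepsilon x_j+(1+\delta)x_k>0$, which forces $k\in\sigma$ and $x_k=y_k$. Then $-\delta x_k\ge \varepsilon x_j>0$, which is impossible because $x_k>0$. Thus $j\notin\sigma$ for every fixed point of $G$.

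\textbf{Step 2: the two collections coincide.} A fixed point of $G$ with $j$ silent satisfies all the on- and off-neuron conditions of $G|_{[n]\setminus\{j\}}$, because $x_j=0$ removes $j$'s influence. This gives $\FPe(G,a,c)\subseteq\FPe(G|_{[n]\setminus\{j\}},a,c)$.

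For the converse, take a fixed point $x$ of the restricted network with e-support $\tau$ and extend it by $x_j=0$. Only the off-condition $y_j\le 0$ needs checking, and the Step~1 inequality with $x_j=0$ gives $y_j\le y_k-(1+\delta)x_k$. There are two cases:
\begin{itemize}
\item If $k\in\tau$, then $y_j\le x_k-(1+\delta)x_k=-\delta x_k<0$.
\item If $k\notin\tau$, then $x_k=0$ and $y_k\le 0$, so $y_j\le 0$.
\end{itemize}
This gives the reverse inclusion.

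Nondegeneracy is only needed so that supports and fixed points are in one-to-one correspondence (Lemma~\ref{lem:unique fp}). The main care point is the elimination step: one must check that the self-excitation $W_{ii}=c$ exactly cancels the $-c x_i$ coming from inhibition, so that the effective diagonal is zero. The sign bookkeeping in Step~1 relies on this cancellation, together with $\delta>0$ in Step~1 and $\varepsilon>0$, which is precisely why the result is restricted to $1<c<a+1$.
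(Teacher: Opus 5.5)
Your proof is correct and takes essentially the route the paper relies on: eliminating $x_I^*=c\sum_j x_j^*$ matches the E-I TLN fixed points with those of the CTLN with $\delta=c-1>0$ and $\varepsilon=a-c+1>0$, after which the standard CTLN domination argument applies, which you reprove in-line rather than citing \cite{curto2025graphical}. One small remark: your Step~2 gives a bijection of fixed points by extension with $x_j=0$, so nondegeneracy is not actually needed for the equality of the e-support collections.
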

The proof of this lemma is given in~\cite{curto2025graphical}. 
Lemma~\ref{lem:domination} is particularly powerful in the classical CTLN regime $\delta>0, \varepsilon>0$, which corresponds to the moderate inhibition regime in E-I TLNs.
When inhibition becomes even stronger ($c>a+1$) or weaker ($c<1$), corresponding to the strong or weak inhibition regime, the domination relation no longer yields the same consequences. For example, consider the $2$-node path case $1 \to 2$, in the strong inhibition regime, we obtain $\FPe(G,a,c) = \{ \{2\},  \{1\},\{1,2\}\}$ and in the weak inhibition regime, we obtain $\FPe(G,a,c) = \{\{1,2\}\}$. These examples show that a dominated node cannot, in general, be removed in the strong or weak inhibition regime.
 
However, in the weak inhibition regime, there is still a graphical relation that constrains the fixed point structure. 
We call this relation \emph{graphical weak domination} since it is obtained by dropping the requirement 
$j \to k$ and $k \not\to j$ from Definition~\ref{def:domination}.

\begin{definition}[Graphical weak domination]
    Let $j,k \in [n]$ be vertices of $G$. We say that $k$ \emph{weakly graphically dominates} $j$ in $G$ if for each $i \in [n]\setminus\{j,k\}$, $i \to j $ implies $i \to k$.
\end{definition}
In the E-I TLN setting, $k$ weakly graphically dominates $j$ in $G$ leads to $W_{ji}\leq W_{ki}$ for any node $i\neq j,k $ in $G$. 
Unlike graphical domination, graphical weak domination imposes no condition on the edge between $j$ and $k$. 
Thus, it can be symmetric: if $j$ and $k$ receive the same incoming edges from the rest of $G$, then they weakly graphically dominate each other. 
For instance, in the $n$-path $1 \to 2 \to \dots \to n$, node $i$ weakly dominates node $i+1$ for $i \in [n-1]$, and node $1$ is weakly dominated by any other node.

We now show how weak domination constrains the fixed point e-supports in the weak inhibition regime $0<c<1$ with $a>0$.

\begin{lemma}\label{lem:weakdomination}
Let $j,k \in [n]$ be vertices of $G$ and suppose $k$ weakly graphically dominates $j$ in $G$.
Then for any nondegenerate E-I TLN in the weak inhibition regime ($0<c<1$ and $a>0$), if $\sigma \in \FPe(G,a,c)$, we have
\begin{equation*}
     \quad j \in \sigma \Rightarrow k \in \sigma.
\end{equation*}
\end{lemma}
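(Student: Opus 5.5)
The plan is to argue by contradiction using only the on- and off-neuron conditions at a single fixed point. Suppose $\sigma\in\FPe(G,a,c)$ with $0<c<1$, and let $x^*$ be the corresponding fixed point, which is unique by Lemma~\ref{lem:unique fp}. Assume $j\in\sigma$ but $k\notin\sigma$. The on-neuron condition gives $x_j^*=y_j(x^*)>0$. The off-neuron condition gives $x_k^*=0$ and $y_k(x^*)\le 0$. I will compare $y_k(x^*)$ with $y_j(x^*)$ term by term and show that $y_k(x^*)>0$, which is the contradiction.

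First I write out both $y$-values using the E-I TLN weights. The diagonal entries are $W_{jj}=W_{kk}=c$, and both nodes receive $-x_I^*+\theta$ from the inhibitory node and the input. Using $x_k^*=0$, these are
\begin{align*}
y_j(x^*) &= \sum_{i\ne j,k} W_{ji}x_i^* + c\,x_j^* - x_I^* + \theta,\\
y_k(x^*) &= \sum_{i\ne j,k} W_{ki}x_i^* + W_{kj}x_j^* - x_I^* + \theta.
\end{align*}
Note that the term $W_{jk}x_k^*$ in $y_j$ vanishes because $x_k^*=0$. This is exactly why no condition on the edge between $j$ and $k$ is needed, unlike in Definition~\ref{def:domination}.

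Subtracting the two expressions gives
$$y_k(x^*)-y_j(x^*)=\sum_{i\ne j,k}(W_{ki}-W_{ji})x_i^* + (W_{kj}-c)\,x_j^*.$$
Weak domination gives $W_{ki}\ge W_{ji}$ for all $i\ne j,k$. Together with $x_i^*\ge 0$, the sum is nonnegative. Since $W_{kj}\in\{0,a\}$ is nonnegative, the last term is at least $-c\,x_j^*$.

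Substituting $y_j(x^*)=x_j^*$ then yields
$$y_k(x^*)\ \ge\ x_j^*-c\,x_j^*=(1-c)\,x_j^*>0,$$
because $c<1$ and $x_j^*>0$. This contradicts $y_k(x^*)\le 0$, so $k\in\sigma$.

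There is no real obstacle here. The one point needing care is the self-excitation bookkeeping: the term $c\,x_j^*$ enters $y_j$ but not $y_k$, and the identity $x_j^*=y_j(x^*)$ is what turns it into the factor $(1-c)$. This is precisely where the weak-inhibition hypothesis is used, and it explains why the lemma fails for $c>1$.
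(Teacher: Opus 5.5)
Your proof is correct and follows essentially the same route as the paper. Both assume $j\in\sigma$ and $k\notin\sigma$, then compare $y_k(x^*)$ with $y_j(x^*)$ term by term using $W_{ki}\ge W_{ji}$, $x_k^*=0$, $W_{kj}\ge 0$ and $x_j^*=y_j(x^*)$, which gives $y_k(x^*)\ge(1-c)x_j^*>0$ and contradicts the off-neuron condition.
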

\begin{proof}
From the definition of $y_j$ and $y_k$ in the E-I TLN:
\begin{align*}
  y_k &= W_{kk} x_k + W_{kj} x_j
  + \sum_{i \in [n] \setminus \{j,k\}} W_{ki} x_i + W_{kI}x_I + b_k,\\
  y_j &= W_{jj} x_j + W_{jk} x_k
  + \sum_{i \in [n] \setminus \{j,k\}} W_{ji} x_i + W_{jI}x_I + b_j.
\end{align*}
Since $k$ weakly graphically dominates $j$ in $G$, we have $W_{ji}\leq W_{ki}$. Moreover, in the E-I TLN setting, $W_{jI}=W_{kI}$ and $b_j=b_k=\theta$. Thus, at any $x\in \mathbb{R}^{n+1}_{\ge 0}$, it follows that
\begin{equation*}
    y_k \ge  y_j + (W_{kk}-W_{jk}) x_k+ (W_{kj}-W_{jj}) x_j.
\end{equation*}
Now suppose, for contradiction, that $x^*\in \mathbb{R}^{n+1}_{\ge 0}$ is fixed point with e-support $\sigma$, where $j \in \sigma$ and $k \notin \sigma$. Equivalently, $x^*_j = y^*_j>0$ and $x^*_k = [y^*_k]_+ = 0$, so $y_k^* \le 0$. But applying the last inequality at $x=x^*$, with $x_k^*=0$, $x_j^*=y_j^*$, $W_{jj}=c$, and $W_{kj}\geq 0$, gives
\begin{equation*}
    y^*_k \ge  y^*_j+ (W_{kj}- c ) y^*_j \ge  (1- c) x^*_j>0,
\end{equation*}
since $0<c<1$, contradicting $y_k^*\leq 0$. Hence, for any $\sigma$ such that $k\notin\sigma$, $j\in\sigma$, $\sigma \not\in \FPe(G,a,c)$.  
Therefore $k\notin\sigma$ forces $j\notin\sigma$, or equivalently, $j\in\sigma$ implies $k\in\sigma$.
\end{proof}

\subsubsection{Lemmas for on- and off-neuron conditions}
Lemmas~\ref{lem:uniform_indegree_on} and ~\ref{lem:uniform_indegree_off} below originate from the  uniform in-degree theory for CTLNs~\cite{curto2019fixed}, which establishes the existence of certain fixed point supports for CTLNs in the legal range ($\delta,\varepsilon >0, \varepsilon<\delta/(\delta+1)$).  
Here, we derive analogous on- and off-neuron conditions for nondegenerate E--I TLNs throughout the positive $(a,c)$ parameter space, beyond the moderate inhibition regime (which corresponds to the CTLN legal range).

\begin{lemma}[On-neuron condition for uniform in-degree subgraphs]\label{lem:uniform_indegree_on}
Let $G$ be a directed graph. Suppose the induced subgraph $G|_\sigma$
has  uniform in-degree $d$. Then for any nondegenerate E-I TLN restricted on $G|_\sigma$ with parameters $a$, $c$,
$$
\sigma \in \FPe(G|_\sigma,a,c)
\iff
(|\sigma| - 1)c - d a + 1 > 0.
$$
In this case, the corresponding fixed point is
\[
x^*_{\sigma \cup \{I\}}
=
\alpha
\begin{pmatrix}
\mathbf{1}_\sigma \\
|\sigma|c
\end{pmatrix},
\qquad
\alpha
=
\frac{\theta}{(|\sigma|-1)c - da + 1},
\]
where $|\sigma|$ denotes the cardinality of $\sigma$, and
$\mathbf{1}_\sigma$ is the all-ones vector of size $|\sigma|$.
\end{lemma}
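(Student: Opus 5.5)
The plan is a direct computation via the on-neuron condition. By the discussion at the start of Section~\ref{sec:proof_support}, $\sigma \in \FPe(G|_\sigma,a,c)$ holds if and only if the unique solution of
\[
(I-W_{\sigma\cup\{I\}})\,x_{\sigma\cup\{I\}} = b_{\sigma\cup\{I\}}
\]
has all coordinates strictly positive. Nondegeneracy (Definition~\ref{def:nondegenerate}) guarantees that $I-W_{\sigma\cup\{I\}}$ is invertible, so the solution exists and is unique. Since the restricted network lives on $G|_\sigma$, there are no off-neurons to check. Thus the whole task is to solve this linear system explicitly and read off the sign conditions.

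First I write out the system. Let $m=|\sigma|$ and let $A$ be the $m\times m$ adjacency matrix of $G|_\sigma$, with $A_{ij}=1$ if $j\to i$. Then $W_\sigma = aA + cI_m$. Uniform in-degree $d$ means each row of $A$ sums to $d$, so $A\mathbf{1}_\sigma = d\,\mathbf{1}_\sigma$. The equations become
\[
\bigl((1-c)I_m - aA\bigr)x_\sigma + x_I\mathbf{1}_\sigma = \theta\mathbf{1}_\sigma,
\qquad
x_I = c\,\mathbf{1}_\sigma^\top x_\sigma .
\]

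Next I try the ansatz $x_\sigma = \alpha\mathbf{1}_\sigma$, which gives $x_I = mc\,\alpha$. Substituting and using $A\mathbf{1}_\sigma=d\mathbf{1}_\sigma$, each excitatory row reads
\[
(1-c-da)\alpha + mc\,\alpha = \theta,
\]
that is, $\alpha\bigl((m-1)c - da + 1\bigr) = \theta$. If $(m-1)c-da+1\neq 0$, this yields the stated $\alpha$ and the stated fixed point. By uniqueness of the solution, this is the only candidate.

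The remaining step is to check that the coefficient cannot vanish. Here I use nondegeneracy: if $(m-1)c-da+1=0$, then the vector $(\mathbf{1}_\sigma, mc)^\top$ lies in the kernel of $I-W_{\sigma\cup\{I\}}$. To see this, repeat the computation above with $\theta$ replaced by $0$. This contradicts invertibility, so the coefficient is nonzero and the ansatz solution is the unique solution.

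Positivity then finishes the argument. Since $\theta>0$ and $c>0$, all coordinates $\alpha$ and $mc\,\alpha$ are positive exactly when $\alpha>0$, that is, when $(m-1)c-da+1>0$. This gives both directions of the equivalence.

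The only delicate point is the degenerate case in the previous paragraph, and the kernel-vector argument handles it. Otherwise everything is routine substitution, so I do not expect a genuine obstacle. One small remark: the case $m=1$, $d=0$ is consistent with the formula, giving $\alpha=\theta$, which matches the singleton computation.
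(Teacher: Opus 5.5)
Your proposal is correct and follows the paper's proof essentially step for step: the equal-rate ansatz $\alpha(\mathbf{1}_\sigma, |\sigma|c)^\top$, the uniform row sum $1-c-da$ coming from uniform in-degree, uniqueness of the solution by nondegeneracy, and positivity from $\theta>0$ and $c>0$. Your kernel-vector argument, which shows that nondegeneracy forces $(|\sigma|-1)c-da+1\neq 0$, is a small but useful addition, since the paper divides by this quantity without justifying that it is nonzero.
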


\begin{proof}
Let $W_{\sigma \cup \{I\}}$ be the connectivity matrix for the restricted E-I TLN on $G|_\sigma$. Then, determining whether $\sigma\cup\{I\}$ supports a fixed point is equivalent to
determining whether \Eq{eq:on_neuron},
\[
    (I_{|\sigma|+1}-W_{\sigma \cup \{I\}})x_{\sigma \cup \{I\}}
    = b_{\sigma \cup \{I\}}
\]
admits a strictly positive solution. 
Since the induced subgraph $G|_\sigma$
has uniform in-degree $d$, the excitatory submatrix $I_{|\sigma|}-W_\sigma$ has uniform row sum
$1-c-da$. Moreover, in the E-I TLN setting,
$b_{\sigma\cup\{I\}}=(\theta\mathbf{1}_\sigma,0)^T$. This motivates us to look for a solution in which all excitatory firing rates are equal. From the inhibitory equation in~\Eq{eq:on_neuron}, we have $x^*_I=c\sum_{i\in\sigma}x^*_i$ at the fixed point. Thus, we consider a candidate solution 
\[x_{\sigma \cup \{I\}}^*=\alpha(1,1,\dots,1,|\sigma| c)^T.\]

Since $W_{iI}=-1$ and $W_{Ii}=c$ for all $i\in\sigma$, and since the rows
of the submatrix $I_{|\sigma|}-W_\sigma$ have sum $1-c-da$, substituting $x_{\sigma \cup \{I\}}^*$ into \Eq{eq:on_neuron} gives
\begin{align*}
    (I-W_{\sigma \cup \{I\}})x^*_{\sigma \cup \{I\}} &= \alpha \begin{pmatrix}
        (1-c-da+|\sigma|c)\mathbf{1}_\sigma \\
        c|\sigma| - c|\sigma|
    \end{pmatrix} \\
    &= \alpha(1-c-da+|\sigma|c)
    \begin{pmatrix}
        \mathbf{1}_\sigma \\
        0
    \end{pmatrix} = \begin{pmatrix}
        \theta \mathbf{1}_\sigma \\
        0
    \end{pmatrix}
\end{align*}
Hence, $\alpha=\theta/((|\sigma|-1)c-da+1)$ gives the desired result. 
By nondegeneracy, this solution is unique. Thus, $\sigma\cup\{I\}$ supports
a fixed point in the restricted E-I TLN on $G|_\sigma$ if and only if this unique solution is strictly positive. Since
$\theta>0$ and $c>0$, this is equivalent to
$(|\sigma| - 1)c - da + 1 > 0$.
Therefore, $\sigma \in \FPe(G|_\sigma,a,c)
    \iff
    (|\sigma| - 1)c - da + 1 > 0$.
\end{proof}

Lemma~\ref{lem:uniform_indegree_on} also implies that if the induced subgraph $G|_\sigma$ has  uniform in-degree, then the fixed point supported on $\sigma \cup \{I\}$ has the same value for all excitatory entries, even when $G|_\sigma$ is not symmetric. We next derive the off-neuron conditions for such subgraphs.

\begin{lemma}[Off-neuron condition for uniform in-degree subgraphs]\label{lem:uniform_indegree_off}
Let $G$ be a directed graph on $n$ vertices, and let $G|_\sigma$ be an induced proper subgraph with uniform in-degree $d$. For each $k\in [n]\setminus\sigma$, define
\[
d_k:=\big|\{\, i\in\sigma \mid i\to k \text{ in } G \,\}\big|
\]
to be the number of edges from $\sigma$ to $k$.
If $\sigma \in \FPe(G|_{\sigma},a,c)$, then
\[
\sigma \in \FPe(G,a,c) \iff c \ge (d_k-d)a+1, \forall k\in [n] \setminus \sigma. 
\]
\end{lemma}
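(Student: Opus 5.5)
The plan is to compute $y_k$ at the explicit fixed point supplied by Lemma~\ref{lem:uniform_indegree_on}, and then read off the off-neuron condition directly. Since $\sigma \in \FPe(G|_\sigma,a,c)$ and $G|_\sigma$ has uniform in-degree $d$, Lemma~\ref{lem:uniform_indegree_on} gives the unique fixed point of the restricted network:
\[
x^*_i=\alpha \ (i\in\sigma),\qquad x^*_I=\alpha|\sigma|c,\qquad \alpha=\frac{\theta}{(|\sigma|-1)c-da+1}>0.
\]
Extend it by zero to the nodes in $[n]\setminus\sigma$. By the discussion of on- and off-neuron conditions preceding Table~\ref{tab:lemmas}, this extension is a fixed point of the full network if and only if $y_k(x^*)\le 0$ for every $k\in[n]\setminus\sigma$. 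By Lemma~\ref{lem:unique fp} it is the only candidate with e-support $\sigma$, so this is also necessary for $\sigma\in\FPe(G,a,c)$.

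The main step is evaluating $y_k$ for $k\notin\sigma$. We have $W_{kk}=c$, but $x^*_k=0$, so the self-excitation term drops out. Among the nodes of $\sigma$, exactly $d_k$ send an edge to $k$, each contributing $a\alpha$. The inhibition contributes $-x_I^*=-\alpha|\sigma|c$. Hence
\[
y_k(x^*)=d_k a\alpha-\alpha|\sigma|c+\theta .
\]
Substituting $\theta=\alpha\big((|\sigma|-1)c-da+1\big)$ gives
\[
y_k(x^*)=\alpha\big(d_k a-|\sigma|c+(|\sigma|-1)c-da+1\big)=\alpha\big((d_k-d)a-c+1\big).
\]
Because $\alpha>0$, the condition $y_k(x^*)\le 0$ is equivalent to $c\ge (d_k-d)a+1$. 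Requiring this for all $k\in[n]\setminus\sigma$ gives the stated equivalence.

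The only delicate point is the boundary case of equality, where $y_k(x^*)=0$. The off-neuron condition as stated in the paper uses $y_k\le 0$, so equality is admissible and the non-strict inequality is correct. Nondegeneracy is not needed beyond guaranteeing uniqueness of the candidate fixed point. I do not expect any real obstacle: the whole argument is this one substitution, with care taken over signs and over the fact that $x^*_k=0$ eliminates the $W_{kk}$ term.
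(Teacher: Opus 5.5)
Your proposal is correct and matches the paper's proof: both extend the uniform fixed point from Lemma~\ref{lem:uniform_indegree_on} by zero, compute $y_k = \alpha\big((d_k-d)a - c + 1\big)$, and conclude using $\alpha>0$. Your explicit appeal to Lemma~\ref{lem:unique fp} for the necessity direction is a small clarification that the paper leaves implicit.
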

\begin{proof}
    Since $G|_\sigma$ has uniform in-degree $d$ and $\sigma \in \FPe(G|_{\sigma},a,c)$, Lemma~\ref{lem:uniform_indegree_on} gives the fixed point supported on $\sigma\cup\{I\}$ in the restricted E-I TLN on $G|_\sigma$:
    \[
    x^*_{\sigma \cup \{I\}}
    =
    \alpha
    \begin{pmatrix}
    \mathbf{1}_\sigma \\
    |\sigma|c
    \end{pmatrix},
    \qquad
    \alpha
    =
    \frac{\theta}{(|\sigma|-1)c - da + 1}>0,
    \]
    We now determine when this fixed point persists in the full E--I TLN on $G$. 
    Define a candidate equilibrium
    $$
    x'_i = x_i^* \text{ for } i \in \sigma, 
    \qquad x'_i = 0 \text{ for }  i \notin \sigma, 
    \qquad x'_I = x_I^* .
    $$   
    For any node $k \in [n]\setminus \sigma$, the total input to $k$ at $x'$ is
    \begin{align*}
    y_k(x') &= \sum_{i\in \sigma}W_{ki}x'_i + \sum_{i \notin \sigma}W_{ki}x'_i - x'_I + \theta \\
    & =a d_k  \alpha - |\sigma|c\alpha + \theta \\
    & = (d_k a -|\sigma|c + (|\sigma|-1)c - da + 1)\alpha \\
    & = ((d_k-d) a - c  + 1)\alpha.
    \end{align*}
    Thus, $y_k(x') \le 0$ for all $k \in [n]\setminus \sigma$ is equivalent to $(d_k-d) a - c  + 1 \le 0$ for all $k \in [n]\setminus \sigma$. Thus $\sigma \in \FPe(G,a,c) \iff c \ge (d_k-d)a+1, \forall k\in [n] \setminus \sigma$.
\end{proof}

We now consider the subgraphs that can arise in paths and cycles. 
A useful feature of $n$-paths and $n$-cycles is that every proper induced subgraph is a disjoint union of one or more directed paths, where a singleton is regarded as a $1$-path. 
This observation motivates Lemma~\ref{lem:paths}, which establishes the existence of full-support fixed points for all subnetworks of E--I TLNs on paths and cycles.

\begin{lemma}[On-neuron condition for union of paths]\label{lem:paths}
    Let $G$ be a directed graph, and let the subgraph $G|_{\sigma}$ be a disjoint union of one or more directed paths. For any nondegenerate E-I TLN, if $(a,c)$ lies in the strong or weak inhibition regime, i.e., if $c>a+1$ or $0<c<1$ (with $a>0$), then $\sigma \in \FPe(G|_{\sigma},a,c)$.
\end{lemma}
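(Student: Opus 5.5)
The plan is to solve the on-neuron system \Eq{eq:on_neuron} for $G|_\sigma$ explicitly and check that every coordinate of the solution is positive. Lemma~\ref{lem:unique fp} (nondegeneracy) guarantees that this explicit solution is the unique candidate, so positivity of its coordinates is exactly the statement $\sigma\in\FPe(G|_\sigma,a,c)$.

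\textbf{Reduction to a recursion.} Write $S=\sum_{i\in\sigma}x_i$. The inhibitory equation gives $x_I=cS$. Substituting into the excitatory equations, node $i\in\sigma$ (with $W_{ii}=c$, $W_{iI}=-1$, and at most one in-neighbor $p(i)$ inside $\sigma$) satisfies
\[
(1-c)x_i - a\,x_{p(i)} = \theta - cS \od \mu ,
\]
where the term $a\,x_{p(i)}$ is absent if $i$ is the source of its path. The right-hand side $\mu$ is the same scalar for every node.

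Label the nodes of each path component $1\to 2\to\cdots\to \ell$. The recursion $x_1=\mu/(1-c)$, $x_{k+1}=(\mu+a x_k)/(1-c)$ gives $x_k=\mu\, s_k$ with
\[
s_k=\frac{1}{1-c}\sum_{m=0}^{k-1} r^m=\frac{1}{1-c}\cdot\frac{1-r^k}{1-r},\qquad r=\frac{a}{1-c}.
\]
Let $\Sigma$ be the sum of all $s_k$ over all nodes of all components. Then $S=\mu\Sigma$, and the definition $\mu=\theta-c\mu\Sigma$ yields $\mu=\theta/(1+c\Sigma)$, provided $1+c\Sigma\neq 0$; this is guaranteed by nondegeneracy. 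It remains to check signs, using that $x_k=\mu s_k$ and $x_I=c\mu\Sigma$.

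\textbf{Weak regime ($0<c<1$).} Here $r>0$, so every $s_k>0$. Hence $\Sigma>0$ and $\mu>0$, which makes all $x_k>0$ and $x_I>0$.

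\textbf{Strong regime ($c>a+1$).} This is the main step. Now $1-c<0$ and $r=-a/(c-1)\in(-1,0)$, so $(1-r^k)/(1-r)>0$ for every $k$, and therefore $s_k<0$ for every node. Positivity of the $x_k$ thus requires $\mu<0$, which is equivalent to $1+c\Sigma<0$. Write
\[
c\,|\Sigma| = \frac{c}{c-1}\sum \frac{1-r^k}{1-r},
\]
where the sum runs over all nodes. Each summand is positive, and the summand for a path source ($k=1$) equals $1$. Since $\sigma\neq\emptyset$, the sum is at least $1$. Since $c/(c-1)>1$, we get $c|\Sigma|>1$, i.e.\ $1+c\Sigma<0$. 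Then $\mu<0$, so $x_k=\mu s_k>0$ for all $k$, and $x_I=c\mu\Sigma>0$ as the product of two negative quantities with $c>0$.

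The delicate point is the strong case. There the sign of every node and of the normalization flips together, and positivity relies on the alternating partial sums $\sum_{m<k}r^m$ staying positive; this is exactly where $|r|<1$, i.e.\ $c>a+1$, is used.
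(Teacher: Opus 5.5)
Your proof is correct and follows essentially the same route as the paper: solve the path recursion node by node as geometric partial sums in $a/(1-c)$, use $a/(1-c)>-1$ to give every excitatory coordinate (and $x_I^*$) a common sign, and then pin that sign down. The only difference is cosmetic: you compute the normalization $\mu=\theta/(1+c\Sigma)$ explicitly and bound $c|\Sigma|>1$ in the strong regime, whereas the paper fixes the sign by substituting $x_I^*=c\sum_{j\in\sigma}x_j^*$ into a source node's equation, which yields a sum of same-sign terms equal to $\theta>0$ (this is equivalent to your observation that $(1-c)(1+c\Sigma)>0$).
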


\begin{proof}
    Recall that $\sigma\cup\{I\}$ supports a fixed point in the restricted E-I TLN on $G|_\sigma$ if and only if \Eq{eq:on_neuron},
    \[
    (I-W_{\sigma \cup \{I\}})x_{\sigma \cup \{I\}} = b_{\sigma \cup \{I\}}
    \]
    admits a strictly positive solution. By nondegeneracy, $I-W_{\sigma\cup\{I\}}$ is invertible, so the solution is unique and is given by $x_{\sigma \cup \{I\}}^* = (I-W_{\sigma \cup \{I\}})^{-1}b_{\sigma \cup \{I\}}$. However, $x_{\sigma \cup \{I\}}^*$ may not be positive.
    Next, we show that in the strong and weak inhibition regimes, $x_{\sigma \cup \{I\}}^*$ has strictly positive coordinates and hence survives as a fixed point of this subnetwork when $G|_{\sigma}$ is a disjoint union of one or more paths.

    First, suppose that the induced subgraph $G|_\sigma$ is a directed path. Without
    loss of generality, we may write $G|_\sigma: 1\to 2\to \cdots \to m$. Then we have
    \[
    I - W_{\sigma\cup\{I\}}=
    \begin{pmatrix}
    1-c & 0 & 0 & \cdots & 0 & 1\\
    -a & 1-c & 0 & \cdots & 0 & 1\\
    0 & -a & 1-c & \cdots & 0 & 1\\
    \vdots & \vdots & \vdots & \ddots & \vdots & \vdots\\
    0 & 0 & 0 & \cdots & 1-c & 1\\
    -c & -c & -c & \cdots & -c & 1
    \end{pmatrix},
    \qquad
    b_{\sigma\cup\{I\}}=
    \begin{pmatrix}
    \theta\\
    \theta\\
    \vdots\\
    \theta\\
    0
    \end{pmatrix}.
    \]
Therefore, \Eq{eq:on_neuron} gives
\begin{align}\label{eq:path_first}
    (1-c)x^*_{1} + x^*_I &= \theta,\\
    \label{eq:path_other}
    (1-c)x^*_{i} - a x^*_{i-1}  + x^*_I &= \theta,
    \quad i = 2,\dots,m,\\
    \label{eq:path_inh}
    - c\sum_{i\in[m]} x^*_{i} + x^*_I  &= 0.
\end{align}
    If $m=1$, then only \Eq{eq:path_first} and \Eq{eq:path_inh} are needed. 
    Since $c\neq 1$ in the strong and weak inhibition regimes, we can solve for $x_1^*$ as
    \[
    x_1^* = \frac{-x^*_I + \theta}{1-c}.
    \]
    For convenience, define $p := (-x_I^*+\theta)/(1-c)$ and $q:= a/(1-c)$. Then $x_1^*=p$, and
    \[
    x_2^* = \frac{a x^*_1 + \theta -x^*_I}{1-c} = \frac{a}{1-c}p+p = p(1+q).
    \]
    The same recursion gives the coordinates $x_{i}^*$ in terms of $p$ and $q$:
    \[
    x^*_i = p(1+q+\cdots+q^{i-1}),\quad i = 1,2,\dots,m.
    \]
    In either inhibition regime, $c>a+1$ (strong) or $0<c<1$ (weak), we have $q=a/(1-c)>-1$. Therefore, if $-1< q < 0$, then $1+q+\cdots+q^{i-1} = (1-q^{i})/(1-q) >0$ for every $i\in[m]$ while if $q \ge 0$, $1+q+\cdots+q^{i-1}>0$ is obvious. Hence, all excitatory coordinates $x^*_i$ have the same sign as $p$ and so does $x^*_I$ since $x_I^* = \sum_{i\in [m]}c x_i^*$ from \Eq{eq:on_neuron}.
    
    Now substituting $x_I^* = \sum_{i\in [m]}c x_i^*$ into \Eq{eq:path_first} gives
    \begin{equation*}
     x_1^* + \sum_{i=2}^m c x_i^* = \theta.
    \end{equation*}
    The left-hand side has the same sign as $p$, since all $x_i^*$ have the same sign as $p$ and $c>0$. Since $\theta>0$, we must have $p>0$. This implies $x_i^*>0$ for all $i\in[m]$ and $x_I^*>0$.  Therefore, when $G|_\sigma$ is a directed path, in both strong and weak inhibition regimes, $x^*$ is a fixed point supported on $\sigma \cup \{I\}$, and hence
    $\sigma \in \FPe(G|_{\sigma},a,c)$.

    Next, we consider the case when $G|_\sigma$ is a disjoint union of paths. Fix an arbitrary path component, say $i_1 \to i_2 \to \dots \to i_m$. 
    Since there are no edges between distinct path components, the on-neuron equations \Eq{eq:on_neuron} along this path have the same recursive form as \Eq{eq:path_first} and \Eq{eq:path_other}:
     \begin{align*}
        (1-c)x^*_{i_1} + x^*_I &= \theta,\\
    (1-c)x^*_{i_k} - a x^*_{i_{k-1}}  + x^*_I &= \theta,
    \quad k = 2,3,\dots,m.
    \end{align*}
    The only difference is that the inhibitory node now depends on all path components: $x_I^*=c\sum_{j\in\sigma}x_j^*$. By again setting $p = (-x^*_I + \theta)/(1-c)$ and $q=a/(1-c)$, we can express the coordinates $x_{i_k}^*$ along this path in terms of $p$ and $q$, as before:
    \[
    x^*_{i_{k}} = p(1+q+\cdots+q^{k-1}),\quad k = 1,2,\dots,m.
    \]
    The same argument applies to every other path component. Thus, all excitatory coordinates $x_i^*$ for $i\in\sigma$ and the $x^*_I$, have the same sign as $p$.
    By the same argument as above, substituting $x_I^*=c\sum_{j\in\sigma}x_j^*$ into the first equation of any path component shows that $p>0$. Hence, $x_i^*>0$ for all $i\in\sigma$ and $x_I^*>0$.
    Therefore, in both inhibition regimes, when $G|_\sigma$ is a disjoint union of directed paths, $x^*$ is a fixed point supported on $\sigma\cup\{I\}$ in the restricted E-I TLN on $G|_\sigma$. Hence $\sigma\in\FPe(G|_\sigma,a,c)$.
\end{proof}

After verifying the on-neuron conditions for a subgraph $G|_\sigma$ of a path or cycle, we must check the off-neuron conditions to determine whether the corresponding fixed point in the restricted subnetwork survives in the full E-I TLN on $G$. 
The final lemma provides sufficient off-neuron conditions for arbitrary subgraphs.

\begin{lemma}[Off-neuron conditions for arbitrary subgraphs]
\label{lem:off condition}
Let $G$ be a directed graph on $n$ vertices, and let $G|_\sigma$ be an induced proper subgraph. For each $k\in [n]\setminus\sigma$, define
\[
d_k:=\big|\{\, i\in\sigma \mid i\to k \text{ in } G \,\}\big|
\]
to be the number of edges from $\sigma$ to $k$. If $\sigma \in \FPe(G|_{\sigma},a,c)$ and $c\ge d_k a+1$ for all $k\in [n]\setminus\sigma$, then
\[
\sigma \in \FPe(G,a,c).
\]
\end{lemma}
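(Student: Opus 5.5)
We follow the same extension strategy used in the proof of Lemma~\ref{lem:uniform_indegree_off}, but we replace the explicit formula for the restricted fixed point by a single identity that holds for any fixed point supported on $\sigma\cup\{I\}$. Since $\sigma\in\FPe(G|_\sigma,a,c)$, there is a fixed point $x^*$ of the restricted E-I TLN on $G|_\sigma$ with $x_i^*>0$ for $i\in\sigma$ and $x_I^*>0$. We extend it to the full network by setting $x'_i=x^*_i$ for $i\in\sigma$, $x'_k=0$ for $k\in[n]\setminus\sigma$, and $x'_I=x^*_I$. The on-neuron equations for $\sigma\cup\{I\}$ are unchanged by this extension, because the silent nodes contribute zero. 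By the discussion of on- and off-neuron conditions, it therefore suffices to check that $y_k(x')\le 0$ for every $k\notin\sigma$.

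For $k\notin\sigma$ we compute
\[
y_k(x')=a\sum_{i\in\sigma,\ i\to k}x^*_i - x^*_I+\theta .
\]
The inhibitory on-neuron equation gives $x^*_I=c\sum_{i\in\sigma}x^*_i$. The key step is to control $\theta$ in terms of the excitatory coordinates. To do this, we sum the excitatory on-neuron equations, or better, use a single one. For each $i\in\sigma$, the equation reads
\[
(1-c)x_i^*-a\sum_{j\in\sigma,\ j\to i}x_j^*+x_I^*=\theta .
\]
Since $a\ge0$ and $x^*\ge0$, this yields $\theta\le (1-c)x_i^*+x_I^*$ for every $i\in\sigma$. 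Substituting $x_I^*=cS$ with $S=\sum_{i\in\sigma}x_i^*$ gives $\theta\le x_i^*+c(S-x_i^*)$. Hence
\[
y_k(x')\le a\sum_{i\to k}x_i^* - cS + x_i^* + c(S-x_i^*) = a\sum_{i\to k}x_i^*+(1-c)x_i^* .
\]
This bound is not yet useful. The cleaner route is to use the bound $x_j^*\le M:=\max_{i\in\sigma}x_i^*$, and to choose $i$ to be the node attaining the maximum $M$. Then
\[
y_k(x')\le a d_k M+(1-c)M=(d_k a+1-c)M\le 0,
\]
using the hypothesis $c\ge d_k a+1$ and $M>0$. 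Note that the terms $-cS$ and $+cS$ cancel exactly, so the estimate depends only on $M$.

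The main point to verify is this choice of the maximizing index. We need the inequality $\theta\le (1-c)M+x_I^*$ at the maximizer $i_0$, and this is immediate from the $i_0$-th equation after dropping the nonnegative input term $a\sum_{j\to i_0}x_j^*$. This holds regardless of the sign of $1-c$, because we only drop a nonnegative quantity. The inequality $a\sum_{i\to k}x_i^*\le a d_k M$ is trivial. Hence $y_k(x')\le 0$ for all $k\notin\sigma$. Therefore $x'$ is a fixed point of the full E-I TLN with e-support $\sigma$, and $\sigma\in\FPe(G,a,c)$.
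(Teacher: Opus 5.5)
Your proposal is correct and follows the paper's proof: extend the restricted fixed point by zero, use the on-neuron equation at the maximizing index $i_0$ (dropping the nonnegative input term) to get $\theta - x_I^* \le (1-c)M$, and conclude $y_k(x') \le (d_k a + 1 - c)M \le 0$. The intermediate substitution $x_I^* = cS$ is an unnecessary detour, since those terms cancel. You could also avoid reusing the letter $i$ as both the summation index and the fixed index.
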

\begin{proof}
    Since $\sigma \in \FPe(G|_{\sigma},a,c)$, let $x^*$ be a fixed point with support $\sigma \cup \{I\}$ in the restricted E-I TLN on $G|_\sigma$.
    Choose $j\in\sigma$ such that $x_j^*=\max_{i\in\sigma}x_i^*$.  For this node $j$, the on-neuron condition gives
    $$
    x_j^* =y_j^* =  c x_j^* + \sum_{i\in\sigma,\, i\ne j} W_{ji}x_i^* - x_I^* + \theta>0 .
    $$
    Since all $W_{ji} \ge 0$ and $x_i^*>0$, we have
    $
    (1-c)x_j^* \ge -x_I^* + \theta
    $. 
    
    Now construct a candidate fixed point of the full network.  
    Define $x'$ by
    $$
    x'_i = x_i^*\text{ for } i \in \sigma, 
    \qquad x'_i = 0\text{ for } i \notin \sigma, 
    \qquad x'_I = x_I^* .
    $$
    We next verify that $x'$ satisfies the fixed point conditions for the full E-I TLN. For any node $k \in [n]\setminus \sigma$, the total input to $k$ at $x'$ is
    \begin{align*}
    y_k(x') &= \sum_{i\in \sigma}W_{ki}x'_i + \sum_{i \notin \sigma}W_{ki}x'_i - x'_I + \theta \\
    & \le d_k a x_j^* - x_I^* + \theta \\
    & \le (d_k a+1-c)x_j^* \le 0,
    \end{align*}
    where the third line follows from $-x_I^*+\theta \le (1-c)x_j^*$.  Thus, $[y_k(x')]_+ = 0 = x'_k$ for $k \in [n] \setminus \sigma$, and  $[y_i(x')]_+ = y_i^* = x_i^* = x'_i>0$ for $i \in \sigma\cup\{I\}$.  Hence $x'$ is a fixed point supported on  $\sigma \cup \{I\}$ in the E-I TLN on $G$, i.e.
    $
    \sigma \in \FPe(G,a,c).
    $
\end{proof}

\subsubsection{Proof of Theorems~\ref{thm:path_supports} and~\ref{thm:cycle_supports}}
\thmone*
\begin{proof}
    \emph{Case (i): $c > a + 1$}.
    For any nonempty subset $\sigma \subseteq [n]$, the induced subgraph $G|_\sigma$ of the path $G$ is a disjoint union of one or more directed paths. Thus, the on-neuron condition is satisfied by Lemma~\ref{lem:paths}, and the off-neuron condition is satisfied by Lemma~\ref{lem:off condition} since for all $k \in [n]\setminus \sigma$, $d_k \le 1$. Hence, every nonempty $\sigma\subseteq [n]$ e-supports a fixed point, that is, $\sigma \in \FPe(G,a,c)$. 
    Since the empty set cannot be an e-support in  E–I TLNs due to the external input $\theta>0$, we have $\FPe(G,a,c) = \{\sigma \subseteq [n] \mid \sigma \ne \emptyset\}$.
    
    \emph{Case (ii): $1 < c < a + 1$}. By the definition of domination, we know node $2$ graphically dominates node $1$ in $G$, so by Lemma~\ref{lem:domination} we have
    $
    \FPe(G,a,c) = \FPe(G|_{[n]\setminus\{1\}},a,c).
    $
    Next, within the induced subgraph $G|_{[n]\setminus\{1\}}$, node $3$ dominates node $2$, and again by Lemma~\ref{lem:domination},
    $
    \FPe(G|_{[n]\setminus\{1\}},a,c) = \FPe(G|_{[n]\setminus\{1,2\}},a,c).
    $
    Repeating this argument recursively for nodes $3,4,\dots,n-1$, we obtain
    $$
    \FPe(G,a,c) = \FPe(G|_{\{n\}},a,c).
$$
    The remaining subgraph $G|_{\{n\}}$ is a singleton, which means $\FPe(G,a,c) \subseteq \{\{n\}\}$.  Since $G|_{\{n\}}$ has uniform in-degree $0$, Lemmas~\ref{lem:uniform_indegree_on} and~\ref{lem:uniform_indegree_off} imply that $\{n\} \in \FPe(G,a,c)$.  Therefore, when $1 \le c < a + 1$, we have $\FPe(G,a,c) = \{\{n\}\}$.
    
    \emph{Case (iii): $c<1$}. By Lemma~\ref{lem:paths}, we already know that $[n] \in \FPe(G,a,c)$.  
    Now suppose, for contradiction, that there exists another e-support $\sigma \in \FPe(G,a,c)$ with $\sigma \neq [n]$.
    Assume $1 \notin \sigma$, then by the path structure of $G$, each node $i$ weakly dominates node $i+1$ for $i = 1,2,\dots,n-1$.  
    Applying Lemma~\ref{lem:weakdomination} successively, we obtain 
    $2 \notin \sigma$, $3 \notin \sigma$, and so on, eventually leading to $\sigma = \emptyset$.  
    Since the e-support cannot be empty, this yields a contradiction.  
    Hence, any e-support $\sigma$ must contain node 1.
    If $1 \in \sigma$, then in the path structure every node $i$ ($i = 2,3,\dots,n$) weakly dominates node 1.  
    By Lemma~\ref{lem:weakdomination}, this implies $2,3,\dots,n \in \sigma$, and therefore $\sigma = [n]$.
    Combining the two arguments, we conclude that when $c < 1$, $\FPe(G,a,c) = \{[n]\}$.
\end{proof}

\thmtwo*
\begin{proof}
   A first observation is that, since $G$ has  uniform in-degree~$1$, Lemma~\ref{lem:uniform_indegree_on} implies that the full-support fixed point exists if and only if $c > (a-1)/(n-1)$. Consequently, $[n] \in \FPe(G,a,c)$ in cases~(i) and~(ii), but not in case~(iii).

   \emph{Case (i): $c > a + 1$.}
    We now consider the nonempty proper subset $\sigma \subsetneq [n]$.  
    Since $G$ is an $n$-cycle, the induced subgraph $G|_\sigma$ is a disjoint union of one or more directed paths.  Similar to path case, by Lemma~\ref{lem:paths} and Lemma~\ref{lem:off condition},  $\sigma \in \FPe(G,a,c)$.
    Since the empty set cannot be a support, we conclude that
    $$
    \FPe(G,a,c) = \{\, \sigma \subseteq [n] \;\mid\; \sigma \ne \emptyset \,\}.
    $$
    
    \emph{Cases (ii) and (iii): $c<a+1$ and $c \neq 1$}. Next, we prove the last two cases by showing that no proper subset $\sigma \subsetneq [n]$ can belong to $\FPe(G,a,c)$ if $c<a+1$. 

    When $1< c<a+1$, using graphical domination (Lemma~\ref{lem:domination}), we see that for any e-support $\sigma \subsetneq [n]$ in $\FPe(G,a,c)$, the induced subgraph $G|_{\sigma}$ must be an independent set. Indeed, if $\sigma$ were not independent set, then there must exist two nodes $j,j+1 \in \sigma$ such that $j-1 \notin \sigma$ and $j \to j+1$, where $n+1 \equiv 1$.
    In the induced subgraph $G|_\sigma$, $j$ is dominated by $j+1$ and graphical domination implies
    $$\FPe(G|_{\sigma},a,c) = \FPe(G|_{\sigma\setminus j},a,c).$$
    But since $j \in \sigma$, it follows that $\sigma \notin \FPe(G|_{\sigma},a,c)$, a contradiction. Hence every admissible $\sigma \subsetneq [n]$ must be an independent set. But as shown from Lemma \ref{lem:uniform_indegree_off}, independent sets in cycles require $c\ge a+1$ to become a e-support. Thus, no proper subset can belong to $\FPe(G,a,c)$ for any $1 < c < a + 1$.

    When $c < 1$, suppose $\sigma \neq \emptyset$ is a fixed point e-support, that is, $\sigma \in \FPe(G,a,c)$. Pick any node $j \in \sigma$. For an $n$-cycle, every node $i \in [n]$ weakly dominates its successor $i+1$ (with $n+1 \equiv 1$). Thus, $j-1$ weakly dominates $j$, and Lemma~\ref{lem:weakdomination} implies $j-1 \in \sigma$.  
    Repeating this argument inductively gives $j-2 \in \sigma$, $j-3 \in \sigma$, and so on, until all nodes are included.  
    Hence $\sigma = [n]$. Therefore, no proper subset can be a fixed point e-support when $c < 1$.
    
    Combining these two arguments with Lemma~\ref{lem:uniform_indegree_on}, since every cycle has uniform in-degree $1$, we conclude that if $(a - 1)/(n - 1) < c < a + 1 $, 
    $\FPe(G,a,c) = \{[n]\}$ and if $c \leq (a - 1)/(n - 1)$, $\FPe(G,a,c) = \emptyset$.
\end{proof}

\subsection{Proof of Theorem \ref{thm:cycle_2_modes}}
\label{sec:z-mode}

Here we show that the $z$-mode and mean mode decouple in the full-support chamber when the graph $G$ has both  uniform in-degree and  out-degree and prove Theorem~\ref{thm:cycle_2_modes}.
We recall the following definition.

\begin{definition}[uniform in-degree and out-degree]
We say that $G$ has  uniform in-degree $d$ if every vertex receives exactly $d$ incoming edges in $G$, and that $G$ has  uniform out-degree $d$ if every vertex in $G$ sends exactly $d$ outgoing edges in $G$.
\end{definition}

Here, the notion of degree refers purely to the underlying directed graph $G$. Hence, the prescribed self-excitation in E-I TLNs is not counted.
Under these definitions, we obtain the following decoupling result.

\begin{restatable}{theorem}{thmfour}\label{thm:uniform degree}
Let $[n]\in \FPe(G,a,c)$.
Then in the chamber $R_{[n]}$:
\begin{itemize}
    \item If $G$ has  uniform in-degree, the $z$-mode is independent of the mean variables $(x_E,x_I)$.
    \item If $G$ has  uniform out-degree, the mean mode is independent of the difference variables $z$.
\end{itemize}
\end{restatable}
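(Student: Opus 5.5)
In the chamber $R_{[n]}$ every excitatory node is active, so the dynamics are given by the linear system $L_{[n]}$:
\[
\frac{dx}{dt} = (-I + cI + A_G)x - x_I\mathbf{1} + \theta\mathbf{1},
\qquad
\tau_I\frac{dx_I}{dt} = -x_I + c\,\mathbf{1}^\top x .
\]
Here $A_G$ is the matrix with $(A_G)_{ij}=a$ if $j\to i$ and $0$ otherwise. The plan is to differentiate the difference variables $z_j=x_{j+1}-x_j$ and the total $x_E=\mathbf{1}^\top x$ directly from these equations, and to read off which terms cancel under each degree hypothesis.

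\textbf{Mean mode under uniform out-degree.} Summing the excitatory equations gives
\[
\frac{dx_E}{dt} = (c-1)x_E + \mathbf{1}^\top A_G x - n x_I + n\theta .
\]
The column sums of $A_G$ are $a$ times the out-degrees, so $\mathbf{1}^\top A_G x = \sum_j a\,\mathrm{outdeg}(j)\,x_j$. Under uniform out-degree $d_{\mathrm{out}}$ this equals $a d_{\mathrm{out}} x_E$. The inhibitory equation already involves only $c\,x_E$. Hence $(x_E,x_I)$ obey the closed planar linear system
\[
\dot x_E = (c-1+ad_{\mathrm{out}})x_E - n x_I + n\theta,
\qquad
\tau_I\dot x_I = -x_I + c\,x_E ,
\]
which contains no $z$. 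For the cycle ($d_{\mathrm{out}}=1$) this gives the Jacobian used later for the $a+c<1+1/\tau_I$ criterion.

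\textbf{$z$-mode under uniform in-degree.} Subtracting consecutive equations, the terms $-x_I+\theta$ cancel, leaving
\[
\dot z_j = (c-1)z_j + (A_Gx)_{j+1} - (A_Gx)_{j}.
\]
The main obstacle is to show that $(A_Gx)_{j+1}-(A_Gx)_j$ is a function of $z$ alone. I would change coordinates to $x = \frac{x_E}{n}\mathbf{1} + Pz$, where $P$ is the fixed linear map reconstructing the zero-sum part of $x$ from the differences; the zero-sum part is determined by $z$ because $z$ determines $x$ up to adding a multiple of $\mathbf{1}$. Then
\[
A_G x = \frac{x_E}{n}A_G\mathbf{1} + A_G P z .
\]
Uniform in-degree $d$ means the row sums are constant, so $A_G\mathbf{1}=ad\,\mathbf{1}$. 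This term is a multiple of $\mathbf{1}$ and is annihilated by the difference operator $D$ (rows $e_{j+1}-e_j$). Therefore $\dot z = (c-1)z + D A_G P z$, with no dependence on $x_E$ or $x_I$. Stated more invariantly, $A_G$ preserves $\mathrm{span}(\mathbf{1})$ and $D$ kills that span, so $DA_G$ factors through $D$. This is the step where I would be careful to write $P$ explicitly, for instance as a cumulative-sum map minus its mean, to confirm that $D P = I_{n-1}$.

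Finally, the statement is local to $R_{[n]}$, and the hypothesis $[n]\in\FPe(G,a,c)$ only guarantees that this chamber contains the full-support fixed point. So no chamber-boundary issues arise, and the two bullets follow.
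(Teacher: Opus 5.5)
Your proposal is correct and is essentially the paper's argument in matrix form. The paper likewise writes $x_i = x_E/n + f_i(z)$ and uses the constant row sums from uniform in-degree to cancel the $x_E/n$ and $-x_I+\theta$ terms in $\dot z_j$, and it sums the excitatory equations, using constant column sums from uniform out-degree, to close the $(x_E,x_I)$ system; your explicit linear $P$ with $DP=I_{n-1}$ simply makes the paper's unspecified fluctuation functions $f_i$ concrete.
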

\begin{proof}
    First, we consider the case where $G$ has  uniform in-degree $d_{in}$.
    By the definition of $z$ and $x_E$, each $x_i$ can be decomposed into a average term  plus a fluctuation term depending only on $z$. Explicitly, there exist functions $f_i:\mathbb{R}^{\,n-1}\to\mathbb{R}$, $i\in[n]$, such that
    \begin{equation*}
        x_i = \frac{x_E}{n} + f_i(z).
    \end{equation*}
    Here $f_i(z)$ encodes the deviation of $x_i$ from the mean value, and satisfies $\sum_{i\in\sigma} f_i(z)=0$.  
    Then, for each $z_j$, we compute
    \begin{align*}
        \frac{d z_j}{dt}  = & \frac{d x_{j+1}}{dt} - \frac{d x_{j}}{dt}  \\
         = & -x_{j+1} + \left[ \sum_{k=1}^{n} W_{j+1,k}x_k + W_{j+1,I}x_I + b_{j+1} \right]_+  \\
        &  + x_{j} - \left[ \sum_{k=1}^{n} W_{jk}x_k + W_{jI}x_I + b_{j} \right]_+.
    \end{align*}
   In the chamber $R_{[n]}$, all $y_i(x) >0$ for $i\in [n]$, so that the threshold nonlinearity $[\ ]_+$ can be omitted. Because $ W_{ii} = c$, $ W_{iI} = -1$, $b_i= \theta$ for all $i \in [n]$, and both nodes $j+1$ and $j$ have the same uniform in-degree $d_{in}$, we obtain
    \begin{align*}
        \frac{d z_j}{dt}  
         = & -x_{j+1} + \left(W_{j+1,j+1}\frac{x_E}{n} + d_{in}a \frac{x_E}{n}+\sum_{k=1}^{n} W_{j+1,k}f_k(z)-x_I + \theta \right)  \\
       & + x_{j} - \left(W_{jj}\frac{x_E}{n} + d_{in}a \frac{x_E}{n} + \sum_{k=1}^{n} W_{jk}f_k(z)-x_I + \theta \right)  \\
        = & -z_j + \sum_{k=1}^{n} (W_{j+1,k} - W_{jk})f_k(z).
    \end{align*}
    Thus, since $W_{j+1,k} - W_{jk}$ is constant and $f_k(z)$ is a function of $z$, $\dot z_j$ depends only on $z$ and it is independent of $(x_E,x_I)$. 
    Therefore, when $G$ has  uniform in-degree, the $z$-mode decouples completely from the mean variables. 
    
    Next, suppose $G$ has  uniform out-degree $d_{out}$, Summing over all $i\in [n]$ in the chamber $R_{[n]}$, we obtain
    \begin{align*}
        \frac{d x_E}{dt} = & \sum_{j=1}^{n} \frac{d x_{j}}{dt} = - x_E + (d_{out} a + c) x_E - n x_I + n \theta, \\
        \frac{d x_I}{dt}  =& - x_I + c x_E,
    \end{align*}
    which involves only $(x_E,x_I)$.  
    Hence, in this case, the mean mode decouples from the $z$-mode.
\end{proof}

\thmthree*
\begin{proof} 
    We analyze the two chambers $R_\emptyset$ and $R_{[n]}$ separately.
    In the chamber $R_\emptyset$, the $z$-mode is given by
    \begin{equation*}
         \frac{dz_i}{dt} = -z_i, \quad i=1,\dots,n-1,
    \end{equation*}
    while the mean mode is
    \begin{align*}
     \frac{dx_E}{dt} &= -x_E, \\
     \tau_I \frac{dx_I}{dt} &= -x_I + c x_E.
    \end{align*}
    Here the $z$-mode and mean mode are completely decoupled. The $z$-mode has a single fixed point $z^*=0$, which is always stable since all its eigenvalues equal $-1$.  
    
    On the other side, in the chamber $R_{[n]}$, under the ordering $(1,2,\dots,n)$, the $z$-mode is given by
    \begin{align*}
    \frac{dz_1}{dt} &= (-1+c)\,z_1 - a\sum_{j=1}^{n-1} z_j, \\
     \frac{dz_i}{dt} &= (-1+c)\,z_i + a\,z_{i-1}, \qquad i=2,\dots,n-1. 
    \end{align*}
    These equations involve only $z=(z_1,\dots,z_{n-1})$, so the $z$-mode can be studied independently of the mean mode. 
    We rewrite $z$-mode in matrix form: $\dot z = M_{n-1}z$ with
    \begin{align*}
    M_{n-1}
    =
    \begin{pmatrix}
    -1+c-a & -a  & \cdots & -a  & -a \\
    a  & -1+c & \cdots & 0   & 0 \\
    \vdots & \vdots & \ddots & \vdots   & \vdots \\
    0  & 0  & \cdots & -1+c & 0 \\
    0 & 0 & \cdots & a & -1+c  
    \end{pmatrix}
    \;=\;
    (c-1) I_{n-1} + a B,
    \end{align*}
    where $I_{n-1}$ is the $(n-1)\times(n-1)$ identity matrix and $B$ is defined by\footnote{Although the underlying graph has cyclic symmetry, the matrices $M_{n-1}$ and $B$ are not circulant. This is because we use the independent difference variables $z=(z_1,\dots,z_{n-1})$ and omit $z_n=x_1-x_n$, which breaks the cyclic symmetry at the level of this coordinate representation.}
    $$
    B_{i+1,i}=1\quad i=1,\dots,n-2, 
    \qquad
    B_{1,j}=-1\quad j=1,\dots,n-1,
    $$
    with all remaining entries equal to zero. Hence 
    the eigenvalue-eigenvector pairs of $B$ are $\bigl(\omega_k,\ v_k\bigr)$:
    \begin{equation*}
    \omega_k = e^{2\pi i k/n},\quad
    v_k = (\omega_k^{n-2}, \omega_k^{n-3},\dots,1)^\top,
    \end{equation*}
    for $k=1,\dots,n-1$.
    Therefore the eigenvalues for $M_{n-1}$ are 
    \begin{equation*}
        \lambda_j=c-1+a e^{\frac{2\pi j}{n}i}, \quad j = 1,\dots,n-1.
    \end{equation*}
     Thus, $z$-mode is stable if and only if $\Re(\lambda_j)<0$ for all $j=1,\dots,n-1$, equivalently $c < 1 - a\cos\frac{2\pi}{n}$,
     and unstable when $c>1-a\cos\frac{2\pi}{n}$. 
    
    For the mean mode in the chamber $R_{[n]}$ we have
    \begin{align*}
     \frac{dx_E}{dt} &= (-1+c+a)\,x_E - n x_I + n\theta, \\
     \tau_I \frac{dx_I}{dt} &= -x_I + c x_E,
    \end{align*}
    with the Jacobian matrix 
    \begin{equation*}
    M_{EI}=\begin{pmatrix}
    -1+c+a & -n\\
    \tfrac{c}{\tau_I} & -\tfrac{1}{\tau_I}
    \end{pmatrix}.
    \end{equation*}
    Stability requires $\operatorname{tr}(M_{EI})<0$ and $\det(M_{EI})>0$.  
    The trace inequality gives
    $$
    a+c < 1 + \tfrac{1}{\tau_I}.
    $$
    Meanwhile, $\det(M_{EI})>0$ is equivalent to $c>\tfrac{a-1}{n-1}$, which is precisely the existence condition for the full-support fixed point.  
    Since we are already assuming the fixed point exists, this inequality is automatically satisfied.  
    Therefore the mean mode is stable when $ a+c < 1 + \tfrac{1}{\tau_I}$ and unstable when $ a+c > 1 + \tfrac{1}{\tau_I}$.
\end{proof}

\section*{Acknowledgments}
The authors thank Professor Arvind Kumar and members of the Curto Lab for helpful discussions and valuable feedback on this work.
Special thanks to Nicole Sanderson, Yuchen (Jency) Jiang, and Joaqu\'in Casta\~neda Castro for their detailed comments and careful checking of the proofs.
\bigskip

\pagebreak

\bibliographystyle{unsrt}
\bibliography{references} 

\pagebreak
\section{Supplement}
\label{sec:supplement}

\subsection{Supplementary figures}
\begin{figure}[H]
    \centering
    \includegraphics[width=.9\linewidth]{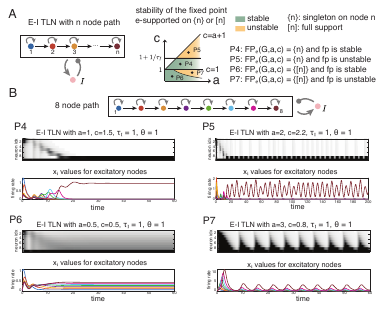}
    \caption{
    Emergent dynamics in the moderate and weak inhibition regimes for an E-I TLN whose underlying graph is an $n$-path.
    (A) Network architecture and stability conditions for fixed points supported on $\{n,I\}$, where $[n]=\{1,2,\ldots,n\}$.
    (B) Four dynamical states (shown by P4-P7) of the $n$-path E-I TLN, illustrated for the $8$-path case. In each state, the top panel shows firing rates over time in grayscale, and the bottom panel shows node-colored activity. The initial conditions are vectors whose entries are sampled randomly from $(0,0.1)$.
    }
    \label{fig:npath_other_inhibition}
\end{figure}

\begin{figure}[H]
    \centering
    \includegraphics[width=0.9\linewidth]{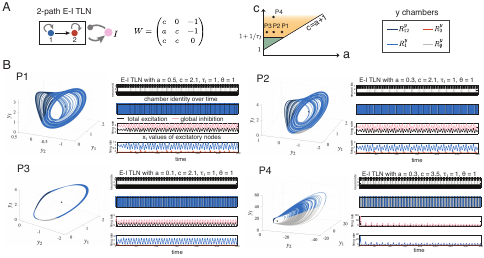}
    \caption{Different chaotic attractors around the singleton fixed point $\{1\}$ in an E-I TLN whose underlying graph is a 2-path.
    (A) Connectivity matrix for the 2-path E–I TLN and the $(a,c)$ parameter choices.
    (B) Möbius-like chaotic oscillations under different parameter choices. All simulations use the same initial condition $(1,0.1,2)$, chosen to activate node 1.}
    \label{fig:2path_chaos}
\end{figure}

\begin{figure}[H]
    \centering
    \includegraphics[width=0.9\linewidth]{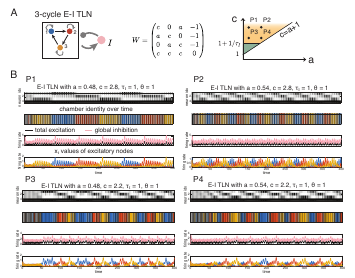}
    \caption{Different sequential chaotic oscillations in an E-I TLN whose underlying graph is a 3-cycle.
    (A) Connectivity matrix of the 3-cycle E–I TLN and the $(a,c)$ parameter choices.
    (B) Sequential chaotic oscillations exhibiting similar qualitative behaviors but different patterns. The initial conditions are vectors whose entries are sampled randomly from $(0,0.1)$.}
    \label{fig:3cycle_sco}
\end{figure}

\subsection{$2$-path E-I TLN}
\label{sec:2-path}

Here we consider a slightly more complex structure: 2 node path with $1 \to 2$ and $2 \not\to 1$. For this graph $G$, the dynamics of the correponding E-I TLN is given by
\begin{subequations}
\begin{align}
           \frac{dx_1}{dt} &= -x_1 + \left[ c x_1 - x_I + \theta \right]_+, \\
           \frac{dx_2}{dt} &= -x_2 + \left[ c x_2 + ax_1 - x_I + \theta \right]_+, \\
    \tau_I \frac{dx_I}{dt} &= -x_I + \left[ c x_1 + cx_2         \right]_+. 
\end{align}
\end{subequations}
Now, each nonempty subset of $\{1,2\}$ yields a candidate e-support.  A direct computation shows that the network
admits at most three e-supports:
\begin{enumerate}
    \item $\esupp(x^*) = \{1\}$. Since $\big|-I+W|_{\{1,I\}}\big|=1$, the corresponding fixed point is unique (if it exists) and given by  $(x^*_1,x^*_2,x^*_I) = (\theta,0,c\theta)$.
    This fixed point exists when $y_1(x^*)>0$ and $ y_2(x^*) \leq 0$, i.e.,\ $c \geq a+1$.
    \item $\esupp(x^*) = \{2\}$. Similar to $\{1\}$,  $\big|-I+W|_{\{2,I\}}\big|=1$, and the unique fixed point is
    $(x^*_1,x^*_2,x^*_I) = (0,\theta,c\theta)$.
    It exists when $y_1(x^*) \leq 0$ and $ y_2(x^*)>0$, i.e.,\ $c \geq 1$.
    \item $\esupp(x^*) = \{1,2\}$. By calculating $\big|-I+W|_{\{1,2,I\}}\big|=c^2-ac-1$, we find the E-I TLN in the chamber $R_{12}$ is degenerate when $c^2-ac-1 = 0$ and non-degenerate otherwise. 

    When $c^2-ac-1 = 0$, $-I+W|_{\{1,2,I\}}$ is degenerate and the fixed point condition requires
    $$
    x_1+cx_2 = \theta, \qquad x_1+cx_2 = c\theta.
    $$
    These two equations hold only when $c=1$, which leads to $a=0$ since $c^2-ac-1 = 0$. However, we only consider the positive $(a,c)$ parameters. Thus, there is no fixed points here. 
    
    When $c^2-ac-1 \neq 0$, the unique fixed point is
    $$
    (x^*_1,x^*_2,x^*_I) 
    = \Biggl(\frac{1-c}{1+ac-c^2}\,\theta,\; \frac{a+1-c}{1+ac-c^2}\,\theta,\;  
      \frac{ac+2c-2c^2}{1+ac-c^2}\,\theta\Biggr),
    $$
    which exists when $y_1(x^*)>0$ and $ y_2(x^*)> 0$. This occurs precisely when $c>a+1$ or $c<1$.
\end{enumerate}
Based on the existence condition of these fixed points, it is natural to divide the $(a,c)$ parameter space ($a>0, c>0$) into three regions: the strong inhibition regime ($c>a+1$), the moderate inhibition regime ($1<c<a+1$), the weak inhibition regime ($c<1$) and two boundaries ($c=1$ and $c=a+1$). In the strong inhibition regime, all three supports survive. In the moderate inhibition regime, $\{2,I\}$ is the unique support, while in the weak inhibition regime, only $\{1,2,I\}$ can support a fixed point.

Next, we analyze stability.  
The fixed points supported on $\{1,I\}$ or $\{2,I\}$ have the same reduced Jacobian as in the singleton system, since the induced subgraphs $G|_{\{1\}}$ and $G|_{\{2\}}$ are themselves singletons.  
Thus, each is stable when $c<1+\tfrac{1}{\tau_I}$ and unstable when $c>1+\tfrac{1}{\tau_I}$.  
For the full-support fixed point $\{1,2,I\}$, the Jacobian is
$$
J=\begin{pmatrix}
c-1 & 0 & -1 \\
a & c-1 & -1 \\
\tfrac{c}{\tau_I} & \tfrac{c}{\tau_I} & -\tfrac{1}{\tau_I}
\end{pmatrix}.
$$
When $c>a+1$, this fixed point is always unstable.  
For $0<c<1$, the Routh-Hurwitz conditions show that it is stable exactly when
$$
0<a<a_{\max}(c,\tau_I),\qquad
a_{\max}(c,\tau_I)=\frac{c^2 -1 +\tau_I\bigl(-2c+2+\tfrac{1}{\tau_I}\bigr)
           \bigl((c-1)^2 + \tfrac{2}{\tau_I}\bigr)}{c},
$$
and unstable when $a>a_{\max}(c,\tau_I)$. Finally, combining the existence and stability conditions, the positive $(a,c)$ parameter space is divided into 6 distinct regions. Across these regions, numerical simulations reveal seven qualitatively different dynamical states, denoted by seven points (P1–P7), as illustrated in Figure~\ref{fig:2path}B–C.

\begin{figure}
    \centering
    \includegraphics[width=1\linewidth]{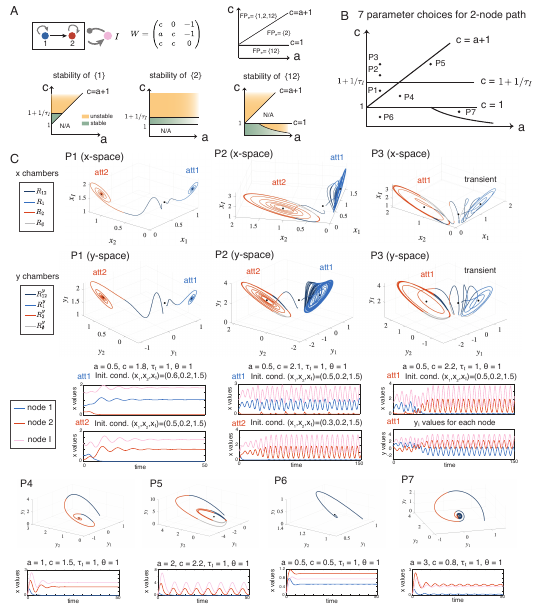}
   \caption{
    Seven parameter choices of the E-I TLN whose underlying graph is a 2-node path.
    (A) Network architecture and connectivity matrix of the 2-path E-I TLN, together with the existence and stability of fixed points with different supports.
    (B) Six parameter regions corresponding to the seven states (denoted by seven points), determined by the existence and stability conditions of the fixed points. 
    Points~P2 and~P3 occupy the same parameter region but exhibit distinct dynamical behaviors.
    (C) Dynamical attractors associated with each of the seven points; black dots correspond to fixed points. 
    The initial conditions for P1-P3 are indicated above the rate curves, whereas those for P4-P7 are vectors with entries sampled randomly from $(0,0.1)$. 
    Trajectories pass through different chambers, each identified by a unique color. The colors in the activity plots correspond to the node colors shown in panel~A.
}
    \label{fig:2path}
\end{figure}

In the strong inhibition regime, all three possible fixed points appear at the parameter points P1, P2, and P3. Among these, the fixed point on $\{1,2,I\}$ is always unstable. The qualitative differences between these states arise as $c$ crosses the stability threshold $c = 1 + 1/\tau_I$. When $c < 1 + 1/\tau_I$ (P1), the singleton fixed points e-supported on $\{1\}$ and $\{2\}$ are both stable.
When $c > 1 + 1/\tau_I$ (P2), these two fixed points lose stability and each gives rise to a dynamical attractor.
The attractor emerging from $\{1\}$ resembles a M\"obius-like chaotic attractor and corresponds to an E–I oscillation from single excitatory node and the inhibitory node. Additional examples of this behavior are shown in Figure~\ref{fig:2path_chaos}.
Within the same parameter region ($c > 1 + 1/\tau_I$), a second dynamical pattern (shown by P3) is also observed: the chaotic attractor around $\{1,I\}$ disappears, yet trajectories may still linger near $\{1,I\}$ for a while before ultimately converging to the attractor around $\{2,I\}$.

For parameter choices P1, P2, and P3, we display the corresponding attractors in both $x$-space and $y$-space. Since the variables satisfy $x_i \ge 0$, the dynamics are confined to the nonnegative orthant. As a result, certain attractors—such as the chaotic attractor associated with $\{1,I\}$—partially evolve along boundary hyperplanes, which can obscure their full geometric structure. In contrast, the transformed variables $y$ are not constrained to be nonnegative, allowing the attractors to unfold fully in $y$-space. For this reason, all attractors are presented in $y$-space in the following figures.

In the moderate inhibition regime, the only possible support is $\{2,I\}$, which is in agreement with the classical CTLN results. Depending on the stability of this unique fixed point, the E–I system exhibits two qualitatively distinct behaviors: a stable fixed point (P4) or an E–I oscillation (P5) around $\{2,I\}$.
Notably, the periodic attractor does not remain confined to the chamber $R_2$ and $R_\emptyset$; instead, it repeatedly traverses $R_{2}$, $R_{12}$, and $R_{\emptyset}$. It shows that even though the support does not contain node $1$, it does not remain silent throughout this oscillatory behavior.

In the weak inhibition regime, $\{1,2,I\}$ is the unique support. Hence, the system displays two distinct behaviors depending on the stability of the fixed point supported on $\{1,2,I\}$.
For the parameter point P6, this fixed point is stable, and trajectories converge to an equilibrium where node $1$ settles at a larger activity value than node $2$. In contrast, at P7, the fixed point becomes unstable, giving rise to a small-amplitude E-I oscillation supported on $\{1,2,I\}$.
Unlike the singleton case where oscillations require sufficiently strong inhibition ($c>1+1/\tau_I$), here oscillations occur already for $c<1$.
This illustrates that, relative to the singleton network, the presence of a nontrivial graph structure can substantially enlarge the parameter region that supports E–I oscillations.

We now turn to the two boundary cases $c=a+1$ and $c=1$.  
An interesting observation is that at $c = a+1$, $\FPe(G,a,c) = \{\{1\},\,\{2\}\}$, whose cardinality is even!
At first glance, this appears to contradict the parity theorem in~\cite{morrison2024diversity}, which states that any nondegenerate TLN must have an odd number of fixed points.  
The resolution is that the E-I TLN at $c=a+1$ is actually degenerate: condition~(2) in the definition of nondegeneracy fails, and thus the odd-number theorem does not apply.  
For this reason, throughout the paper we exclude the boundary $c=a+1$ from our analysis. A similar issue arises at $c = 1$.  When $G|_\sigma$ is an independent set of size at least two, we obtains  
$\det\left(I - W_{\sigma \cup \{I\}}\right) = 0$ for $c = 1$ , so the network becomes degenerate as well.  
Thus, we will also exclude the boundary $c=1$ from consideration.

\end{document}